\newtheorem{problem}{Problem}
\newtheorem{definition}{Definition}
\newtheorem{lemma}{Lemma}
\newtheorem{proposition}{Proposition}
\begin{document}
\title{Maximum Edge-based Quasi-Clique: Novel Iterative Frameworks}
\author{Hongbo Xia}
\affiliation{
    \institution{Harbin Institute of Technology, Shenzhen}
    \city{Shenzhen}
    \country{China}
}
\email{24s151167@stu.hit.edu.cn}

\author{Shengxin Liu}
\affiliation{
    \institution{Harbin Institute of Technology, Shenzhen}
    \city{Shenzhen}
    \country{China}
}
\email{sxliu@hit.edu.cn}
\authornote{Shengxin Liu and Zhaoquan Gu are the corresponding authors.}

\author{Zhaoquan Gu}
\affiliation{
    \institution{Harbin Institute of Technology, Shenzhen}
  \country{}
}
\affiliation{
    \institution{Pengcheng Laboratory}
    \city{Shenzhen}
    \country{China}
}
\email{guzhaoquan@hit.edu.cn}
\authornotemark[1]

\begin{abstract}
Extracting cohesive subgraphs from complex networks is a fundamental task in graph analytics and is essential for understanding biological, social, and web graphs. The edge-based $\gamma$-quasi-clique model offers a flexible alternative by identifying subgraphs whose edge densities exceed a specified threshold $\gamma$. However, finding the exact maximum edge-based quasi-clique is computationally challenging, as the problem is NP-hard and lacks the hereditary property. These characteristics limit the effectiveness of conventional pruning methods and the development of efficient reduction rules. As a result, existing algorithms, such as \texttt{QClique} and \texttt{FPCE}, struggle to scale to large graphs. In this paper, we revisit the problem and propose a novel iterative framework that reformulates the problem as a sequence of hereditary subproblems, enabling more effective pruning and reduction strategies and improving the worst-case time complexity. Furthermore, we redesign the iterative process and introduce a novel heuristic to further improve practical efficiency. Extensive experiments on 253 large-scale real-world graphs demonstrate that our proposed algorithm \texttt{EQC-Pro} outperforms existing methods by up to four orders of magnitude.
\end{abstract}

\begin{CCSXML}
<ccs2012>
   <concept>
       <concept_id>10003752.10003809.10003635</concept_id>
       <concept_desc>Theory of computation~Graph algorithms analysis</concept_desc>
       <concept_significance>500</concept_significance>
       </concept>
 </ccs2012>
\end{CCSXML}

\ccsdesc[500]{Theory of computation~Graph algorithms analysis}

\keywords{Cohesive subgraph mining, edge-based quasi-clique}

\maketitle
\section{Introduction}
Many types of real-world information can be naturally represented as graphs, including biological protein networks, social networks, and web graphs. A fundamental problem in this context is the extraction of \emph{cohesive subgraphs}. However, due to noise and incomplete edge information, the strict definition of a clique (i.e., a pairwise fully connected subgraph) is often too restrictive for practical scenarios. 
One intuitive relaxation is the \emph{edge-based \(\gamma\)-quasi-clique}~\cite{abello1998massive,abello2002massive}, which introduces a parameter (between 0 and 1) to control the degree of relaxation. Specifically, a subgraph is considered an \emph{edge-based $\gamma$-quasi-clique} if the ratio of its number of edges to that of a complete graph of the same size is at least $\gamma$. When $\gamma = 1$, this definition reduces to the classical clique model. 

Recent studies~\cite{rahman2024pseudo} have provided strong empirical evidence that the edge-based $\gamma$-quasi-clique model offers practical benefits in biological network analysis. For instance, compared to the degree-based $\gamma$-quasi-clique model, which requires each vertex in the subgraph to be connected to at least a $\gamma$-fraction of the other vertices, the edge-based model achieves better predictive performance in identifying protein complexes, as measured by sensitivity, precision, and geometric accuracy~\cite{nepusz2012detecting}. Moreover, compared to heuristic approaches~\cite{nepusz2012detecting}, clusters obtained from exact solutions exhibit higher biological relevance, as demonstrated by Gene Ontology enrichment\footnote{https://geneontology.org/} analysis, which reveals enrichment in more unique and functionally meaningful cellular components. These results highlight the practical advantages of exact dense subgraph mining for uncovering biologically significant structures. Beyond biological networks, the edge-based quasi-clique model has also proven effective in a variety of other contexts, including the analysis of social networks~\cite{Bedi2016CommunityDetection,Fang2020cosub} and the detection of anomalies~\cite{tanner2010koobface,yu2021graph}.

In this paper, we revisit the problem of \emph{finding the exact maximum edge-based quasi-clique}~\cite{veremyev2016exact,ribeiro2019mqc,marinelli2020lp}. Specifically, given a graph and a density threshold $\gamma$, our goal is to identify the largest edge-based $\gamma$-quasi-clique. This problem is challenging as it has been proven to be NP-hard~\cite{pattillo2013maximum}, which creates substantial computational difficulties for exact solutions. Moreover, the edge-based $\gamma$-quasi-clique does not have the hereditary property; that is, its subgraphs are not necessarily edge-based $\gamma$-quasi-cliques. Consequently, many techniques that have been developed for finding cohesive subgraphs with the hereditary property, such as $k$-plexes~\cite{chang2022efficient,chang2024maximum,gao2024maximum} and $k$-defective cliques~\cite{chang2023kdefective,Dai2023defect,Chang24kDC-2}, cannot be directly applied to this problem. This limitation reduces the effectiveness of traditional pruning strategies and further increases the complexity of the problem.

\noindent \underline{\textbf{State-of-the-Art Algorithms.}}
Existing exact algorithms for the maximum edge-based quasi-clique problem fall into two main categories: mixed integer programming (MIP) and branch-and-bound (BNB) techniques. While MIP-based methods~\cite{pattillo2013maximum,marinelli2020lp,veremyev2016exact} can solve instances with up to $10^4$ vertices, they often struggle to scale to larger graphs in practice. Consequently, recent research has focused on developing efficient BNB algorithms.

Among BNB methods, \texttt{QClique}~\cite{ribeiro2019mqc} stands out as the most efficient exact solver to date. It builds on the branching strategy of \texttt{PCE}~\cite{Uno2010}, which exploits the quasi-hereditary structure of edge-based quasi-cliques~\cite{pattillo2013maximum}. \texttt{QClique} introduces a new upper bound to improve pruning, achieving performance competitive with earlier BNB and MIP-based approaches. In parallel, \texttt{FPCE}~\cite{rahman2024pseudo} represents the state-of-the-art algorithm for maximal enumeration and incorporates several refined pruning rules. Although \texttt{FPCE} is designed for enumerating maximal edge-based quasi-cliques, it can be adapted to the exact maximum problem by dynamically tracking the current best solution during search.

Despite these advances, both \texttt{QClique} and \texttt{FPCE} share certain limitations. Specifically, they inherit the same branching mechanism as \texttt{PCE}, making their pruning strategies tightly coupled to this structure. This coupling hinders the design of effective reduction rules for further efficiency improvements; to the best of our knowledge, no such reduction rules have been proposed in the literature. As a result, these solvers remain constrained on large graphs, where graph reduction is often decisive for performance.

\noindent \underline{\textbf{Our New Methods.}} We reformulate the problem as a sequence of hereditary subproblems, enabling pruning and reduction techniques that explicitly leverage hereditary properties. Under this reformulation, we formally prove the correctness of the proposed iterative framework for the edge-based quasi-clique and establish its worst-case time complexity of $O^*(\beta_\kappa^n)$, where $O^*$ suppresses polynomial factors, $n$ is the number of vertices, $\beta_\kappa < 2$ is the largest real root of $x^{\kappa+2} - 2x^{\kappa+1} + 1 = 0$, and $\kappa$ relates to the size of the optimal solution. This improves upon existing exact algorithms such as \texttt{QClique} and \texttt{FPCE}, both of which have $O^*(2^n)$ complexity.
To the best of our knowledge, \texttt{EQC-Pro} is the first exact algorithm for \texttt{MaxEQC} with a non-trivial complexity.

Furthermore, by exploiting the quasi-hereditary property of the edge-based quasi-clique, we develop a bottom-up doubling iterative framework that reduces the number of iterations required to solve hereditary subproblems from $O(n)$ to $O(\log s^*)$ in theory, where $s^*$ denotes the size of the optimal solution and is typically much smaller than $n$ in practice. 
Finally, we design a simple yet effective heuristic grounded in the degeneracy sequence, a standard tool in cohesive subgraph mining. To mitigate the rigidity induced by strict adherence to a fixed degeneracy sequence, we introduce a dynamically refined heuristic strategy that iteratively updates the candidate solution via vertex additions and deletions. This strategy diversifies selection process and can yield better heuristic solutions.

\noindent \underline{\textbf{Contributions.}} Our contributions are summarized as follows.
\begin{itemize}[leftmargin=*, topsep=0.3em]
    \item We adapt the top-down iterative framework to the edge-based quasi-clique setting by reducing the problem to a sequence of hereditary subproblems, thereby achieving improved complexity of \(O^*(\beta_\kappa^n)\). (Section~\ref{sec:top-down})
    \item By leveraging the quasi-hereditary property of edge-based quasi-clique, we propose a bottom-up doubling iterative framework that reduces the number of iterations required to solve hereditary subproblems from \(O(n)\) to \(O(\log s^*)\). (Section~\ref{sec:bottom-up})
    \item We design a novel heuristic strategy, further improving practical performance. (Section~\ref{sec:heu})
\end{itemize}
We conduct extensive experiments on 253 graphs and show that our proposed \texttt{EQC-Pro} achieves up to four orders of magnitude speedup over state-of-the-art baselines \texttt{QClique} and \texttt{FPCE} (Section~\ref{sec:exp}). Our source code can be found in~\cite{appendix}.

\section{Preliminaries}\label{sec:prelim}
Let $G=(V, E)$ be an undirected simple graph with $|V| = n$ vertices and $|E|=m$ edges.
We denote $G[S]$ of $G$ as the subgraph induced by the set of vertices $S \subseteq V$.
For any \(S \subseteq V\) and \(u \in V\), let \(d_G^S(u)\) denote the number of neighbors of \(u\) in \(S\), i.e., $d_G^S(u) = \left| \{ v \in S \mid v \ne u \text{ and } (u,v) \in E \} \right|$.
Moreover, we use \( N_G[u] \) to denote the closed neighborhood of vertex \( u \) in \( G \), i.e., $N_G[u] = \{ v \in V \mid v = u \text{ or } (u,v) \in E \}$, and \( N_G[S] \) to denote the union of \( N_G[u] \) for all vertices \( u \in S \), i.e., $N_G[S] = \bigcup_{u \in S} N_G[u]$.
For an induced graph $g$ of $G$, the vertex set and the edge set of $g$ are denoted by $V(g)$ and $E(g)$, respectively.
For a subgraph \(g\), we define \emph{the number of missing edges} as the number of edges absent in \(g\) compared to a complete graph of the same size, i.e., \( \binom{|V(g)|}{2} - |E(g)|\).
In this paper, we focus on the edge-based $\gamma$-quasi-clique ($\gamma$-EQC)~\cite{abello1998massive,abello2002massive}.
\begin{definition}
Given a real number \( 0 < \gamma < 1 \), a graph \( g \) is said to be an \emph{edge-based \( \gamma \)-quasi-clique} (\( \gamma \)-EQC) if $|E(g)| \geq \gamma \times \binom{|V(g)|}{2}$.
\end{definition}
In other words, an edge-based $\gamma$-quasi-clique is an induced subgraph in which the number of edges is at least a $\gamma$ faction of the number of edges in the corresponding complete graph, i.e., $\binom{|V(g)|}{2}$. We are ready to present our problem in this paper.
\begin{problem}[Maximum edge-based $\gamma$-quasi-clique]
Given a graph $G=(V, E)$ and a real number $\gamma \in (0, 1)$, the Maximum Edge-based $\gamma$-Quasi-Clique Problem (\texttt{MaxEQC}) aims to find an edge-based $\gamma$-quasi-clique in $G$ with the largest number of vertices.
\end{problem}
Let $g^*$ and $s^*=|V(g^*)|$ be the maximum $\gamma$-EQC and its size. Finding the maximum $\gamma$-EQC is known to be NP-hard~\cite{pattillo2013maximum}. 
Moreover, $\gamma$-EQCs lack the \emph{hereditary property}~\cite{pattillo2013maximum}: A subgraph of a $\gamma$-EQC is not necessarily a $\gamma$-EQC. This limitation complicates efficient algorithm design, as even checking maximality by inclusion becomes a non-trivial task for $\gamma$-EQC. Nevertheless, $\gamma$-EQCs satisfy a weaker structural guarantee, known as the \emph{quasi‐hereditary property}.

\begin{proposition}[Quasi‐hereditary property~\cite{pattillo2013maximum}] \label{pro:quasi-hereditary}
Let $G=(V,E)$ be a graph and $S\subseteq V$ such that $G[S]$ is a $\gamma$-EQC. If there exists a vertex $v\in S$ with degree less than the average degree in $G[S]$, i.e., $d_G^S(v) \;\le\; \frac{2\,|E(G[S])|}{|S|}$, then $G[S']$, where $S' = S\setminus\{v\}$, is a $\gamma$‐EQC.
\end{proposition}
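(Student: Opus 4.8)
The plan is to reduce the claim to a short chain of inequalities by carefully bookkeeping how many edges disappear when $v$ is deleted. Write $s := |S|$ and $m := |E(G[S])|$, so that the hypothesis that $G[S]$ is a $\gamma$-EQC reads $m \ge \gamma\binom{s}{2} = \gamma\frac{s(s-1)}{2}$, while the degree hypothesis on $v$ reads $d_G^S(v) \le \frac{2m}{s}$. Since $G$ is a simple graph, deleting $v$ from $S$ removes exactly the $d_G^S(v)$ edges of $G[S]$ incident to $v$, so $|E(G[S'])| = m - d_G^S(v)$, and the goal is to show $|E(G[S'])| \ge \gamma\binom{s-1}{2}$.

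First I would lower-bound $|E(G[S'])|$ using only the degree hypothesis: $|E(G[S'])| = m - d_G^S(v) \ge m - \frac{2m}{s} = m\cdot\frac{s-2}{s}$. Next I would substitute the $\gamma$-EQC bound $m \ge \gamma\frac{s(s-1)}{2}$ into this estimate; for $s \ge 2$ the factor $\frac{s-2}{s}$ is nonnegative, so the substitution is valid and yields $|E(G[S'])| \ge \gamma\frac{s(s-1)}{2}\cdot\frac{s-2}{s} = \gamma\frac{(s-1)(s-2)}{2} = \gamma\binom{s-1}{2} = \gamma\binom{|S'|}{2}$, which is precisely the statement that $G[S']$ is a $\gamma$-EQC. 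The degenerate cases are immediate: if $|S| = 1$ then $S' = \emptyset$ and the condition holds vacuously, and if $|S| = 2$ then $\binom{|S'|}{2} = 0$, so we only need $|E(G[S'])| \ge 0$.

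Honestly there is no serious obstacle here; the argument is two substitutions and a sign check. The one point worth flagging is that the threshold $d_G^S(v) \le 2|E(G[S])|/|S|$ in the hypothesis is exactly what makes the two steps compose without slack — replacing it by $\frac{2m}{s} + \varepsilon$ for any $\varepsilon > 0$ would break the telescoping between $\binom{s}{2}$ and $\binom{s-1}{2}$ — so the proposition is tight, and the only care needed is tracking the direction of each inequality and the sign of $s - 2$.
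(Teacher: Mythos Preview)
Your proof is correct. The paper does not actually give its own proof of this proposition; it is quoted from \cite{pattillo2013maximum} and used as a black box, so there is nothing to compare against beyond noting that your two-line chain of inequalities is exactly the standard argument one finds in that reference.
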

Proposition~\ref{pro:quasi-hereditary} implies that for a given $\gamma$-EQC $g$, removing a vertex satisfying a specific degree constraint guarantees that the remaining graph remains a $\gamma$-EQC, whereas removing an arbitrary vertex does not, due to the non-hereditary property of $\gamma$-EQCs.

\section{A Top-down Iterative Baseline Framework}\label{sec:top-down}
A closely related problem to \texttt{MaxEQC} is the \emph{maximum degree-based quasi-clique problem (\texttt{MaxQC})}, which has been effectively addressed by the state-of-the-art iterative algorithm \texttt{IterQC}~\cite{xia2025iterqc}. We note that neither of these two problems satisfies the hereditary property. Thus, inspired by the key iterative idea behind \texttt{IterQC}, our first framework seeks to transform the \texttt{MaxEQC} problem into a sequence of subproblems that do possess the hereditary property, following a top-down approach.
We introduce the cohesive subgraph of $k$-defective clique, which serves as a key component in our approach.

\begin{definition}[\cite{yu2006defective}]\label{def:defective}
Given an integer $k\geq 1$, a graph $g$ is said to be a {\em $k$-defective clique} if $|{E}(g)|\ge \binom{|V(g)|}{2}-k$.
\end{definition}
In other words, a \(k\)-defective clique \(g\) contains at most \(k\) missing edges. Note that the $k$-defective clique enjoys the hereditary property, which can be leveraged to prune the search space.
Building on the concept of $k$-defective clique, extensive studies~\cite{chen2021computing,gao2022exact,Luo24defective,chang2023kdefective,Chang24kDC-2,dai2024theoretically} have focused on the following problem. 

\begin{problem}[Maximum $k$-defective clique]\label{def:kdc}
Given a graph $G=(V, E)$ and a positive integer $k$, the Maximum $k$-Defective-Clique Problem (\texttt{kDC}) aims to find the largest $k$-defective-clique in $G$. 
\end{problem}

Based on \texttt{kDC}, we can define the following two functions.
\begin{itemize}[leftmargin=*, topsep=0.2em]
        \item \texttt{get-k}$(n) := \lfloor (1 - \gamma) \cdot \binom{n}{2}\rfloor$, which takes a number $n$ of vertices as input and returns an appropriate value of $k$;
        \item \texttt{solve-defect}$(k) :=$ the size of the largest $k$-defective clique in $G$, which takes a value $k$ as input.
\end{itemize}
Inspired by \texttt{IterQC}~\cite{xia2025iterqc}, our top-down iterative approach \texttt{EQC-TD} starts from an upper bound $s_0 > s^*$ and alternately applies $k_{i} = \texttt{get-k}(s_{i-1})$ and $s_{i} = \texttt{solve-defect}(k_{i})$ 
until convergence, i.e., until $s_{i} = s_{i-1}$. 
It can be shown that the sequence $s_i$ is monotonically decreasing and converges to the optimal value $s^*$. 
Thus, \texttt{EQC-TD} correctly reduces the \texttt{MaxEQC} problem, which lacks the hereditary property, to a sequence of \texttt{kDC} instances, which are hereditary. Due to space constraints, the pseudocode and correctness proof of \texttt{EQC-TD} are provided in Appendix~\ref{sec:top-down-appendix}.

\noindent \textbf{\underline{Limitations.}} 
 The framework still suffers from the efficiency issues. \textbf{First}, in the worst case, \texttt{EQC-TD} requires \( n \) calls to \texttt{solve-defect},
 leading to excessive number of iterations. \textbf{Second}, although \texttt{kDC} is hereditary, its computational cost becomes prohibitive as \(k\) increases. Specifically, for a fixed density \(\gamma\), the \texttt{get-k} function indicates that \(k\) grows quadratically with the number of vertices.

\section{An Advanced Bottom-up Iterative Framework}\label{sec:bottom-up}
To address the limitations of the top-down iterative framework, we leverage a structural property, i.e., the quasi-hereditary property in Proposition~\ref{pro:quasi-hereditary}, unique to the definition of $\gamma$-EQC, which is absent under the degree-based formulation. Based on this property, we develop a specially designed bottom-up iterative framework tailored for \texttt{MaxEQC}.
We first introduce an important property of the iterative framework for \(\gamma\)-EQC, which is derived from Proposition~\ref{pro:quasi-hereditary}.

\begin{proposition}\label{pro:bin-property}
Recall that $s^*$ denote the size of the maximum edge-based $\gamma$-quasi-clique. Then, it holds that\\
\textbf{(1)} for any $s \le s^*$, $s \le \texttt{solve-defect}(\texttt{get-k}(s))$, and \\
\textbf{(2)} for any $s > s^*$, $s > \texttt{solve-defect}(\texttt{get-k}(s))$.
\end{proposition}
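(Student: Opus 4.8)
The plan is to route everything through the \emph{number of missing edges}, which is the common currency of the two models. A graph on $t$ vertices is a $\gamma$-EQC precisely when it has at most $(1-\gamma)\binom{t}{2}$ missing edges; since the number of missing edges is a non-negative integer, this is equivalent to having at most $\lfloor (1-\gamma)\binom{t}{2}\rfloor = \texttt{get-k}(t)$ missing edges, i.e., to being a $\texttt{get-k}(t)$-defective clique. With this observation in hand, both parts reduce to short arguments that shuttle between ``size $s$'' and ``size $s^*$'', using the quasi-hereditary property (Proposition~\ref{pro:quasi-hereditary}) in one direction and the ordinary hereditary property of $k$-defective cliques in the other.

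For part \textbf{(1)}, fix $s \le s^*$ and put $k=\texttt{get-k}(s)$. Starting from a maximum $\gamma$-EQC $g^*$ on $s^*$ vertices, I would repeatedly delete a vertex of minimum degree. In any induced subgraph such a vertex has degree at most the average degree $2|E|/|V|$, so Proposition~\ref{pro:quasi-hereditary} applies and the graph remains a $\gamma$-EQC; after $s^*-s$ deletions we obtain a $\gamma$-EQC $g'$ on exactly $s$ vertices. By the observation above, $g'$ is a $k$-defective clique of size $s$ inside $G$, hence $\texttt{solve-defect}(k)\ge s$, which is the claim.

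For part \textbf{(2)}, fix $s>s^*$, put $k=\texttt{get-k}(s)$, and argue by contradiction: suppose $\texttt{solve-defect}(k)\ge s$. Then $G$ contains a $k$-defective clique on at least $s$ vertices; deleting a vertex only removes missing edges, so by the hereditary property of $k$-defective cliques we can pare it down to a $k$-defective clique $h$ on exactly $s$ vertices. Since $h$ has at most $k\le (1-\gamma)\binom{s}{2}$ missing edges, it satisfies $|E(h)|\ge\gamma\binom{s}{2}$, i.e., $h$ is a $\gamma$-EQC on $s>s^*$ vertices --- contradicting the maximality of $s^*$.

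The proof is largely routine; the only points I expect to require care are (i) justifying that the quasi-hereditary reduction in part (1) can be iterated all the way down to size $s$, which I would do by noting that every nonempty induced subgraph contains a vertex whose degree is at most the average and is therefore eligible for removal under Proposition~\ref{pro:quasi-hereditary}; and (ii) keeping the floor in $\texttt{get-k}$ consistent with the integrality of the missing-edge count in both directions. Degenerate small cases are immediate: for $s\le 1$ we have $\texttt{get-k}(s)=0$ and $\texttt{solve-defect}(0)$ equals the maximum clique size, so (1) holds trivially, while for nonempty $G$ such $s$ do not fall under (2).
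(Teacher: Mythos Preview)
Your proposal is correct and follows essentially the same approach as the paper: part~(1) is identical (iteratively peel a minimum-degree vertex from $g^*$ via the quasi-hereditary property to land on a $\gamma$-EQC, hence a $k$-defective clique, of size exactly $s$), and part~(2) is the same contradiction argument with only a cosmetic difference---you pare the witnessing $k$-defective clique down to size exactly $s$ using the hereditary property, whereas the paper keeps it at size $s'=\texttt{solve-defect}(k)\ge s$ and uses the monotonicity $\texttt{get-k}(s)\le\texttt{get-k}(s')$ to conclude it is a $\gamma$-EQC of size $s'>s^*$. Both routes are equally short and valid.
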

\begin{proof}
We recall that the function \(\texttt{solve-defect}(k)\) returns the \emph{largest} integer \(s'\) such that there exists a \(k\)-defective clique of size \(s'\). We then consider two cases according to the relationship between $s$ and $s^*$. In both cases, we let $k = \texttt{get-k}(s) = \left\lfloor (1 - \gamma)\binom{s}{2} \right\rfloor$.

\noindent\underline{\textbf{Case 1: \(s \le s^*\).}}
We show that it is always possible to construct a $\gamma$-EQC of size $s$ from the maximum $\gamma$-EQC $g^*$ of size $s^*$, such that $s \le \texttt{solve-defect}(\texttt{get-k}(s))$ holds.
Specifically, we begin with $g^*$ and iteratively remove a vertex of minimum degree from the subgraph until only \(s\) vertices remain. Since a vertex of minimum degree always has less than or equal to the average degree, the quasi-hereditary property (Proposition~\ref{pro:quasi-hereditary}) ensures that the induced subgraph $g$ on the remaining $s$ vertices remains a $\gamma$-EQC. Thus, the number of edges in $g$ is at least \(\gamma \times \binom{s}{2}\), which implies that $g$ is a feasible \(k\)-defective clique of size $s$ since the number of missing edges is \(\lfloor\binom{s}{2}-\gamma \times \binom{s}{2}\rfloor=k\). Consequently, we have \(\texttt{solve-defect}(k) \ge s\).

\noindent\underline{\textbf{Case 2: \(s > s^*\).}}
We prove by contradiction. Suppose, to the contrary, that $s \le \texttt{solve-defect}(\texttt{get-k}(s))$.
Let \(s' = \texttt{solve-defect}(\texttt{get-k}(s))\), and let $g'$ be the corresponding $k$-defective clique returned. By definition of a $k$-defective clique (Definition~\ref{def:defective}), the number of edges in $g'$ satisfies $E(g') \ge \binom{s'}{2} - \texttt{get-k}(s)$. Since \(\texttt{get-k}(s) \le \texttt{get-k}(s') = \left\lfloor (1 - \gamma)\binom{s'}{2} \right\rfloor\), we have $E(g') \ge \binom{s'}{2} - \left\lfloor (1 - \gamma)\binom{s'}{2} \right\rfloor \ge \gamma \binom{s'}{2}$.
Thus, $g'$ corresponds to a $\gamma$-EQC of size \(s'\), and we know \(s' \ge s > s^*\), which contradicts the maximality of \(s^*\).
\end{proof}

\noindent \textbf{\underline{Bottom-up Framework Overview.}} Proposition~\ref{pro:bin-property} implies that we can identify $s^*$ by carefully analyzing the relationship between $s$ and $\texttt{solve-defect}(\texttt{get-k}(s))$. In particular, based on Proposition~\ref{pro:bin-property}, we can devise a more efficient iterative framework by performing a binary search on the candidate size $s$. Consequently, the number of iterations to determine $s^*$ can be reduced from $O(n)$ to $O(\log n)$, thereby improving efficiency. Nevertheless, it is important to note that the efficiency of \texttt{solve-defect}($k$) is highly dependent on the value of $k$. Specifically, the problem becomes easier to solve as $k$ decreases. This observation motivates us to explore alternative search strategies that can further reduce the computational cost, especially in the early stages of the iterative framework.

To this end, we make use of a \emph{bottom‑up doubling strategy} to progressively approach the optimal value $s^*$. Specifically, let $lb$ and $ub$ denote the current lower and upper bounds of $s^*$. The standard binary search approach would iteratively select the midpoint, $s = \big\lfloor ({lb} + {ub})/2 \big\rfloor$, as the candidate value of $s$ to evaluate at each iteration, updating the bounds based on the result. While this method is efficient in terms of the number of iterations, it may not always be efficient in practice, particularly when we need to invoke \texttt{solve-defect} with a large value of $k$ (e.g., when $ub$ is large).

In contrast, our bottom‑up doubling strategy begins by probing the values of $s$ in an exponentially increasing sequence $\{ {lb}+1, {lb}+2,lb+4,\dots\}$, doubling the step size at each iteration until the value exceeds $s^*$. After a candidate interval that must contain $s^*$ is identified, we then switch to a more localized search, i.e., we perform a binary search within this narrowed interval. This approach keeps the values of $k$ encountered in the early rounds as small as possible, which can potentially improve the overall efficiency of the bottom‑up doubling strategy.

\begin{algorithm}[t]
\SetKwBlock{FuncCheck}{Function \texttt{check}($s$):}{}
\caption{A Bottom-up Doubling Framework: \texttt{EQC-BU}}
\label{alg:bottom-up-framework}
\KwIn{A graph $G=(V,E)$ and a real value $0 < \gamma < 1$}
\KwOut{The size of the maximum $\gamma$-EQC in $G$}

$lb \gets 1$; $ub \gets |V|$\;

\tcp{Doubling Phase}
$gap \gets 1$\; 
\While{$lb + gap \le ub$ \textbf{\emph{and}} \texttt{check}($lb + gap$)}{
    $lb_{\text{new}} \gets lb + gap$\;
    $gap \gets 2 \times gap$\;
}

\tcp{Halving Phase}
$ub_{\text{new}} \gets  \min(lb + gap - 1,ub)$\;
\While{$lb_{\text{new}} \neq ub_{\text{new}}$}{
    $mid \gets \lceil (lb_{\text{new}} + ub_{\text{new}})/2 \rceil$\;
    \eIf{\texttt{check}($mid$)}{
        $lb_{\text{new}} \gets mid$\;
    }{
        $ub_{\text{new}} \gets mid - 1$\;
    }
}

\Return{$lb_{\text{new}}$}\;

\FuncCheck{
  $k \gets \texttt{get-k}(s)$\;
  $s' \gets \texttt{solve-defect}(k)$\;
  \Return{$s' \ge s$}\;
}
\end{algorithm}

\noindent \textbf{\underline{Algorithm.}}
Our bottom-up doubling iterative framework \texttt{EQC-BU} is detailed in Algorithm~\ref{alg:bottom-up-framework}. \texttt{EQC-BU} searches for the optimal value $s^*$ by dynamically adjusting the lower bound $lb$ and upper bound $ub$ of $s^*$. 
In Line 1, \texttt{EQC-BU} initializes $lb$ to 1 and $ub$ to $|V|$, as it holds that $1 \leq s^* \leq |V|$. The core idea of \texttt{EQC-BU} is to efficiently and progressively identify the search interval containing $s^*$ via the \texttt{check} subroutine (described in Lines 14-17). This subroutine evaluates whether a candidate value of \( s \) does not exceed the optimal solution size \( s^* \), as guaranteed by Proposition~\ref{pro:bin-property}. The evaluation process has two phases: a doubling phase and a halving phase.

\noindent \textbf{(1) Doubling Phase (Lines 2-5)}. The doubling phase initializes the step size $gap$ to 1 (Line 2) and iteratively calls \texttt{check}($lb + gap$) for increasing values of $gap$ if $lb+gap$ does not exceed the current upper bound $ub$ (Line 3). If \texttt{check} returns \texttt{true}, then by Proposition~\ref{pro:bin-property}, $lb + gap$ is feasible, which implies $lb + gap \leq s^*$. This value is then used to update the lower bound $lb_{\text{new}}$ (Line 4), which will be the starting point for the halving phase. Then, in Line 5, the algorithm doubles $gap$ to expand the interval upward, aiming to quickly approach or exceed $s^*$. This process repeats until either the feasibility check fails or $ub$ is reached in Line 3. Note that if the current candidate value of $lb+gap$ is infeasible, we know $s^* < lb + gap$, which will be used to update $ub_{\text{new}}$ in the halving phase.\\
\noindent \textbf{(2) Halving Phase (Lines 6-12)}. After the doubling phase terminates, the algorithm proceeds to a binary search (halving) phase to precisely locate the optimal value $s^*$. In this phase, the refined interval $[lb_{\text{new}}, ub_{\text{new}}]$ is set according to the final state after doubling (Line 6). The halving phase then performs a standard binary search within this narrowed interval. In each iteration, it tests the midpoint $mid = \lceil (lb_{\text{new}} + ub_{\text{new}})/2 \rceil$. If \texttt{check}($mid$) returns \texttt{true}, the lower bound $lb_{\text{new}}$ is updated to $mid$; otherwise, the upper bound $ub_{\text{new}}$ is set to $mid - 1$ (Lines 7-12). This binary refinement continues until $lb_{\text{new}} = ub_{\text{new}}$, at which point the optimal value $s^*$ is identified. Finally, \texttt{EQC-BU} returns $lb_{\text{new}}$ as $s^*$ in Line 13.

\begin{figure}[t]
    \centering
    \vspace{-0.1in}
    \subfigure[The top-down iterative strategy (Algorithm~\ref{alg:top-down-framework})]{
        \includegraphics[width=0.42\textwidth, trim=5.9cm 21.6cm 5cm 7cm, clip]{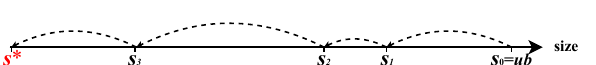}
        \label{fig:iteration-td}
    }
    \subfigure[The bottom-up doubling iterative strategy (Algorithm~\ref{alg:bottom-up-framework})]{
        \includegraphics[width=0.42\textwidth, trim=5.9cm 24.2cm 5cm 3.6cm, clip]{figures/Example/example_iteration.pdf}
        \label{fig:iteration—pro}
    }
    \vspace{-0.25in}
    \caption{Illustration of iterative frameworks.}
    \label{Fig:iteration}
    \vspace{-0.2in}
\end{figure}

\noindent \underline{\textbf{Comparison of Top-Down and Bottom-Up Strategies.}} To facilitate a clearer comparison between the top-down and bottom-up iterative frameworks, Figure~\ref{Fig:iteration} presents both strategies. Dashed arrows indicate updates to the upper bound, while solid arrows represent improvements to the lower bound.
Figure~\ref{fig:iteration-td} illustrates the top-down iterative strategy, in which it attempts to tighten the upper bound on the solution size. Dashed arrows represent these successive updates, which continue until the bound converges to the optimal size $s^*$. However, since each step does not guarantee a fixed reduction, the rate of progress may vary considerably across iterations. For example, there may be a substantial decrease from $s_2$ to $s_3$, or only a negligible improvement from $s_1$ to $s_2$.
The bottom-up doubling approach (Figure~\ref{fig:iteration—pro}) begins by rapidly identifying a solution interval through the doubling phase (shown in blue), and then performs the having phase (shown in black) within that interval to locate the optimal solution. Both approaches require ${O}(\log s^*)$ iterations in the worst case.

\noindent \textbf{\underline{Correctness and Remarks.}}
We remark that \texttt{EQC-BU} consists of a doubling phase, which quickly finds an upper bound of $s^*$, followed by a halving phase that efficiently narrows the candidate interval containing $s^*$. \texttt{EQC-BU} is specifically tailored for our iterative framework for the \texttt{MaxEQC} problem, where the computation performed by the invoked \texttt{kDC} solver becomes increasingly costly as $k$ grows.

For the correctness of \texttt{EQC-BU} (Algorithm~\ref{alg:bottom-up-framework}), we observe that, by Proposition~\ref{pro:bin-property}, \texttt{check}$(s)$ is monotonic in $s$: once it fails at some $s$, it fails for all larger values; conversely, once it succeeds, it succeeds for all smaller values. After the doubling phase (Lines 2-5), we observe that the optimal value $s^*$ always lies in the interval $[lb_\text{new},ub_\text{new}]$. This is because (1) $lb_\text{new}$ is only increased when a feasible candidate is found in Lines 2-5, and (2) $ub_\text{new} = \min(lb + gap - 1,ub)$ is set in Line 6 when $lb+gap > ub$ or an infeasible candidate is detected in Line 3. We then enter the halving (i.e., binary search) phase on this narrowed interval. In each iteration of the binary search, the midpoint test either raises the lower bound or lowers the upper bound while preserving the invariant. The process terminates precisely when the two bounds coincide, at which point the common value must be $s^*$. Thus, \texttt{EQC-BU} is correct.

We then analyze the number of calls to \texttt{check} in \texttt{EQC-BU} (and will discuss the time complexity in Section~\ref{sec:EQC-pro}).
The doubling phase (Lines 2-5) performs at most \(\lceil \log_2 s^* \rceil+1\) calls to \texttt{check}, since the algorithm starts from a small candidate size (e.g., 1) and doubles the size until it exceeds the actual optimal size \(s^*\). Once a valid upper bound is detected, where this upper bound is clearly at least \(s^*\) and at most \(2s^*\), the doubling phase terminates. In the subsequent halving phase (Lines 6-12), the search interval lies strictly below \(2s^*\) and is halved in each iteration. Thus, the number of iterations in this phase is also bounded by \(\lceil \log_2 s^* \rceil\). Overall, the total number of calls to \texttt{check} is bounded by \(O(\log s^*)\).

\section{Our Novel Heuristic Strategies} \label{sec:heu}
Since our bottom-up framework \texttt{EQC-BU} (Section~\ref{sec:bottom-up}) initiates iterations from a lower bound, i.e., a value smaller than the optimal value $s^*$, obtaining a larger lower bound in polynomial time through heuristics can reduce the total number of required iterations and thus improve overall efficiency. In this section, we propose novel heuristic strategies to compute a large lower bound for \texttt{MaxEQC}.

\begin{algorithm}[t]
    \caption{\texttt{Degen-Opt}$(G, \gamma)$}\label{alg:degen-opt}
    \KwOut{A large $\gamma$-EQC in $G$}
    $g \gets$ \texttt{Degen}$(G, \gamma)$\;
    \ForEach{vertex $u \in V$}{
        $g' \gets$ \texttt{Degen}$(G[N_G[u]], \gamma)$\;
        \lIf{$|V(g')| > |V(g)|$}{$g \gets g'$}
    }
    \Return{$g$}\;

    \SetKwFunction{FMain}{Degen}
    \SetKwProg{Fn}{Function}{:}{}
    \Fn{\FMain{$G$, $\gamma$}}{
        Compute a degeneracy ordering in $G$\;
        $H \gets$ the longest suffix of the degeneracy ordering such that $G[H]$ is a $\gamma$-EQC\;
        \Return{$G[H]$}\;
    }
\end{algorithm}

\subsection{Basic Heuristic Methods}
We present two basic heuristic methods \texttt{Degen} and \texttt{Degen-Opt} in Algorithm~\ref{alg:degen-opt}, both adapted from the heuristics in \texttt{KDC}~\cite{chang2023kdefective}. These methods utilize the concept of the \emph{degeneracy ordering}, which is obtained by iteratively removing the vertex with the smallest degree from the graph~\cite{BVZ03m}. The first basic heuristic algorithm \texttt{Degen} (Lines 6-9 of Algorithm~\ref{alg:degen-opt}) iteratively removes the vertex with the minimum degree until the remaining subgraph forms a $\gamma$-EQC. However, \texttt{Degen} has a structural limitation: once a vertex is selected into the solution, \emph{all subsequent vertices in the degeneracy ordering are inevitably included as well}. 
This inflexible selection mechanism results in a rigid combination process, thereby restricting the ability of the algorithm to explore more promising candidate subgraphs.

\texttt{Degen-Opt} in Algorithm~\ref{alg:degen-opt} partially relaxes this restriction. Specifically, for each vertex $u$ in the graph, we consider the subgraph induced by its closed neighborhood, i.e., $N_G[u] = \{ v \in V \mid v = u \text{ or } (u,v) \in E \}$, and execute \texttt{Degen} on this subgraph. In this way, each vertex has more opportunities to be included in the solution. Nevertheless, while \texttt{Degen-Opt} expands the combination space by constructing multiple sequences and thereby allows more flexibility in vertex selection, it still relies on fixed suffixes of degeneracy sequences. Moreover, since \texttt{Degen-Opt} is applied only to subgraphs induced by closed neighborhoods, the diameter of any resulting solution is at most 2, which is not a necessary condition for edge-based quasi-cliques. Thus, although \texttt{Degen-Opt} alleviates the rigidity of \texttt{Degen} to some extent, it does not resolve the limitation and additionally imposes an unnecessary restriction on solution diameter. 

\begin{algorithm}[t]
    \caption{\texttt{EQC-Heu}$(G, \gamma, g)$}\label{alg:subheu}
    \KwIn{A graph $G=(V,E)$, a real value $0 < \gamma < 1$, and a heuristic solution $g$}
    \KwOut{An improved $\gamma$-EQC in $G$}
    $g' \gets g; k \gets \texttt{get-k}(|V(g')|+1)$\; 
    \ForEach{vertex $u \in V(g)$}{
        $S \gets \{u\}$; $C \gets V \setminus \{u\}$\;
        \While{$|{E}(G[S])|\ge \binom{|S|}{2}-k$}{
            \If{$|S| > |V(g')|$}{
                $g' \gets G[S]$; $k \gets \texttt{get-k}(|V(g')|+1)$\;
            }
            \texttt{AddBestVertex}$(S,C)$;
            \texttt{AddBestVertex}$(S,C)$\;
            \texttt{RemoveWorstVertex}$(S,C)$\;
            \texttt{UpdateScore}$(S,C)$\;
        }
    }
    \Return{$g'$}\;

    \SetKwProg{Fn}{Function}{:}{}
\end{algorithm}

\subsection{Our Dynamic Refined Heuristic Strategies}
To further mitigate the inflexibility caused by the combination of \texttt{Degen-Opt}, we proposed our dynamic refined heuristic method \texttt{EQC-Heu-Pro}, which \emph{dynamically refines the heuristic solution by iteratively removing and adding vertices}. This strategy is designed to overcome the constraints on candidate solution construction.

\noindent \underline{\textbf{Algorithmic Overview.}} Our dynamic refined heuristic strategies take as input a feasible $\gamma$-EQC $g$ and iteratively adjust its vertex set to potentially obtain a larger heuristic solution. The core idea is to dynamically add and remove vertices between the selected set $S$ and the candidate set $C$, while maintaining the $\gamma$-EQC property throughout the process.
Specifically, for each vertex in $g$, we use it as a starting point (i.e., a \emph{seed}) and initialize the selected set with this seed. At each iteration, we update the selected set by \emph{adding two vertices and removing one vertex}, guided by a score function that aims to maximize the quality of the solution. After every update, we verify that the subgraph induced by the selected set remains a $\gamma$-EQC. This iterative process proceeds until the number of missing edges in the subgraph induced by the selected set exceeds the allowable threshold. In particular, given the largest $\gamma$-EQC $g'$ identified so far, we compute the maximum number of missing edges permitted for a $\gamma$-EQC of size $|V(g')|+1$, since our goal is to find a larger one.
After evaluating every vertex in $g$ as a seed, we select the largest $\gamma$-EQC found during this process as our final solution.

We note that the score function is designed to maximize the size of the heuristic solution by prioritizing the addition of vertices that increase the density of the subgraph induced by the selected set and the removal of those that decrease it. We adopt the scoring function from~\cite{wang2021nuqclq}, which is detailed in Appendix~\ref{sec:score_function_appendix} for completeness.

\noindent \underline{\textbf{\texttt{EQC-Heu}.}}
Our heuristic strategy \texttt{EQC-Heu} is summarized in Algorithm~\ref{alg:subheu}. We first initialize the current largest $\gamma$-EQC $g'$ and the maximum number $k$ of missing edges permitted for a $\gamma$-EQC of size $|V(g')|+1$ in Line 1 and consider each vertex $u$ in $V(g)$ as a seed in Lines 2-10.
For each $u$, we initialize the selected set $S \gets \{ u \}$ and the candidate set $C \gets V \setminus S$ in Line 3.
We then iteratively update $S$ and $C$ by adding two vertices from $C$ to $S$ (Line 7) and removing one vertex from $S$ to $C$ (Line 8), using \texttt{AddBestVertex} and \texttt{RemoveWorstVertex}, which select vertices according to their scores. Then the \texttt{UpdateScore} function computes the score for each vertex based on the current selected set $S$ and the candidate set $C$. 
(The details of \texttt{AddBestVertex}, \texttt{RemoveWorstVertex}, and \texttt{UpdateScore} are shown in Appendix~\ref{sec:score_function_appendix}.)
Importantly, after every update, we ensure that the subgraph induced by $S$ continues to satisfy the $\gamma$-EQC property (Line 4). Once a larger $\gamma$-EQC is found, we update $g'$ accordingly in Lines 5-6.
The process repeats until the number of missing edges in the current selected set exceeds the allowable threshold, which is recalculated based on the size of the largest $\gamma$-EQC found so far.
Finally, after all vertices in $V(g)$ are considered as seeds, we return the largest $\gamma$-EQC found (Line 10). 

\begin{algorithm}[t]
    \caption{\texttt{EQC-Heu-Pro}$(G,\gamma,g)$}\label{alg:heu-extend}
    \KwIn{A graph $G=(V,E)$, a real value $0 < \gamma < 1$, and a heuristic solution $g$}
    \KwOut{An improved $\gamma$-EQC in $G$}
    \While{true}{
        $g' \gets$ \texttt{EQC-Heu}$(G[N_G[V(g)]], \gamma, g)$\;
        \lIf{$|V(g')| = |V(g)|$}{\textbf{break}}
        $g \gets g'$\;
    }
    \Return{$g$}\;
\end{algorithm}

\noindent \underline{\textbf{Improved \texttt{EQC-Heu-Pro}.}} Although \texttt{EQC-Heu} alleviates the issue of inflexible vertex combinations, applying it directly to the entire graph is inefficient. For large graphs, maintaining set-related information incurs substantial memory overhead, and the associated random access patterns degrade cache performance, thereby reducing overall efficiency. To address these challenges, we propose \texttt{EQC-Heu-Pro}, which applies the heuristic to multiple smaller subgraphs rather than the entire graph, as in Algorithm~\ref{alg:heu-extend}.
Specifically, \texttt{EQC-Heu-Pro} first extracts the subgraph induced by the closed neighborhood $N_G[V(g)]$ of the current solution $g$ (Line 2), and then applies \texttt{EQC-Heu} to this induced subgraph to obtain a refined solution $g'$. If no improvement in size is achieved (i.e., $|V(g')| = |V(g)|$), the process terminates (Line 3); otherwise, $g$ is updated to $g'$ and the procedure is repeated (Line 4). This iterative refinement ensures efficiency and addresses the problem of inflexible solution construction without compromising solution quality.

\noindent \underline{\textbf{Time Complexity Analyses.}} 
We briefly analyze the time complexities of \texttt{EQC-Heu} and \texttt{EQC-Heu-Pro}.  
First, in Algorithm~\ref{alg:degen-opt}, computing degeneracy orderings on neighborhood subgraphs gives overall time $O(\Delta_G m)$, where $\Delta_G$ is the maximum degree. For \texttt{EQC-Heu} (Algorithm~\ref{alg:subheu}), the main cost lies in updating scores for neighbors of the current set $S$. Each update takes $O(\Delta_G s^*)$ time, and with at most $(s^*)^2$ iterations, the total time is $O(\Delta_G (s^*)^3)$. Finally, \texttt{EQC-Heu-Pro} (Algorithm~\ref{alg:heu-extend}) calls \texttt{EQC-Heu} up to $s^*$ times, yielding $O(\Delta_G (s^*)^4)$.

\section{Our Final Iterative Algorithm: \texttt{EQC-Pro}}\label{sec:EQC-pro}
Building on the insights from the bottom-up iterative framework (Section~\ref{sec:bottom-up}) and heuristic strategies (Section~\ref{sec:heu}), we introduce our final algorithm \texttt{EQC-Pro}. We first outline two key observations that guide the design of \texttt{EQC-Pro} and improve its practical performance.

\noindent \textbf{Observation 1.} Additional information can be derived from the function \texttt{solve-defect}$(k)$. In particular, when the \texttt{check} function in Algorithm~\ref{alg:bottom-up-framework} returns \texttt{true}, a solution to the \texttt{kDC} problem can be constructed directly. Moreover, since \texttt{get-k}$(\cdot)$ is a floor function, multiple values of $s$ may map to the same $k$, allowing solutions of \texttt{kDC} for one $s$ to be reused for other $s$ with the same $k$.

\noindent \textbf{Observation 2.} Within a single iteration in Algorithm~\ref{alg:bottom-up-framework}, the lower bound \( s_i \) obtained corresponds to the solution of the \texttt{kDC} problem for a particular value of \( k \). However, since our ultimate objective is to determine the lower bound \( s^* \) for the \texttt{MaxEQC} problem, there remains potential for improvement by leveraging this relationship.

Based on these observations, we (1) cache the results for previously solved values of $k$ to enable direct reuse in subsequent iterations, and (2) apply \texttt{EQC-Heu-Pro} to extend the solution obtained in each iteration, thereby potentially obtaining a larger lower bound for \texttt{MaxEQC}. Due to space limitations, the detailed pseudocode and implementation of \texttt{EQC-Pro} are provided in Appendix~\ref{sec:EQC-pro-appendix}.

\noindent \textbf{\underline{Remarks and Complexity Analyses.}}
We note that \texttt{get-k} can be directly implemented, while \texttt{solve-defect} can leverage existing \texttt{kDC} algorithms~\cite{chang2023kdefective,Chang24kDC-2,dai2024theoretically}. In our implementation, we adopt the state-of-the-art algorithm \texttt{kDC-two}~\cite{Chang24kDC-2} for \texttt{solve-defect}. 
For the time complexity of \texttt{EQC-Pro}, we note that \texttt{EQC-Pro} essentially implements the bottom-up doubling iterative framework \texttt{EQC-BU} described in Section~\ref{sec:bottom-up}. 
As discussed there, the number of iterations is \(O(\log s^*)\). 
In each iteration, the algorithm invokes \texttt{solve-defect}, which dominates the computational cost, while the accompanying heuristic method \texttt{EQC-Heu-Pro} runs in polynomial time and is invoked only once per iteration. 
The complexity of \texttt{kDC-two} (used for \texttt{solve-defect} in our implementation) is $O^*(\beta_k^n)$~\cite{Chang24kDC-2}, where $\beta_k$ is the largest real root of the equation $x^{k+2} - 2x^{k+1} + 1 = 0$, and $O^*$ omits polynomial factors. 
Note that the complexity of \texttt{solve-defect} increases monotonically with $k$. 
By our analysis of \texttt{EQC-BU}, the largest $k$ that \texttt{EQC-Pro} needs to handle is $\kappa = \texttt{get-k}(2s^*)$. 
Thus, the complexity of \texttt{EQC-Pro} is $O^*(\beta_\kappa^n)$, which is asymptotically better than the complexities of both \texttt{QClique} and \texttt{FPCE}, each of which has complexity \(O^*(2^n)\).
For space complexity, \texttt{EQC-Pro} is $O(m+n)$ as
both \texttt{kDC-two} and the heuristic components such as \texttt{EQC-Heu-Pro} maintain linear-sized structures.

\section{Experiments}\label{sec:exp}
\noindent \textbf{\underline{Algorithms.}} We compare \texttt{EQC-Pro} with state-of-the-art \textbf{exact} methods for the maximum \textbf{edge-based} quasi-clique problem (\texttt{MaxEQC}).
\begin{itemize}[leftmargin=*, topsep=0.2em]
    \item {\texttt{QClique}\footnote{We re-implemented the algorithm based on the description in \cite{ribeiro2019mqc}. Our implementation achieves better performance compared to the results reported in their paper.}:} the state-of-the-art branch-and-bound algorithm~\cite{ribeiro2019mqc}.  
    \item {\texttt{FPCE}\footnote{https://github.com/ahsanur-research/FPCE}:} a baseline adapted from the algorithm for enumerating all maximal $\gamma$-EQCs above a given size threshold~\cite{rahman2024pseudo}. We modify it for \texttt{MaxEQC} by maintaining the largest solution found so far and updating the lower bound accordingly to improve pruning.
\end{itemize}

We also evaluate the proposed bottom-up framework (Section~\ref{sec:bottom-up}) and the proposed heuristic method \texttt{EQC-Heu-Pro} (Section~\ref{sec:heu}).
\begin{itemize}[leftmargin=*, topsep=0.2em]
    \item \texttt{EQC-TD}: Algorithm~\ref{alg:top-down-framework}, the top-down iterative approach.

    \item \texttt{EQC-NH}: 
    A variant of \texttt{EQC-Pro} designed to quantify the contribution of \texttt{EQC-Heu-Pro}. All components in \texttt{EQC-Pro} that involve the use of \texttt{EQC-Heu-Pro} are removed in this variant.
\end{itemize}

All algorithms are coded in C++ and compiled using \texttt{g++} with \texttt{-O3}. Experiments are conducted on a machine with an Intel 2.60GHz CPU and 256GB RAM, running Ubuntu 20.04.6. We enforce a maximum runtime of 3 hours (10,800 seconds) per instance. 

\noindent \textbf{\underline{Datasets.}} We conduct experiments on the two collections from Network Repository, which are widely used in related studies. (1) \textbf{Real-World Collection}\footnote{http://lcs.ios.ac.cn/\~caisw/Resource/realworld\%20graphs.tar.gz}: 139 real-world graphs; and (2) \textbf{Facebook Collection}\footnote{https://networkrepository.com/socfb.php}: 114 Facebook social networks.

\begin{figure}[t]
    \centering
    \subfigure[Facebook (3-hour limit)]{
        \includegraphics[width=0.22\textwidth]{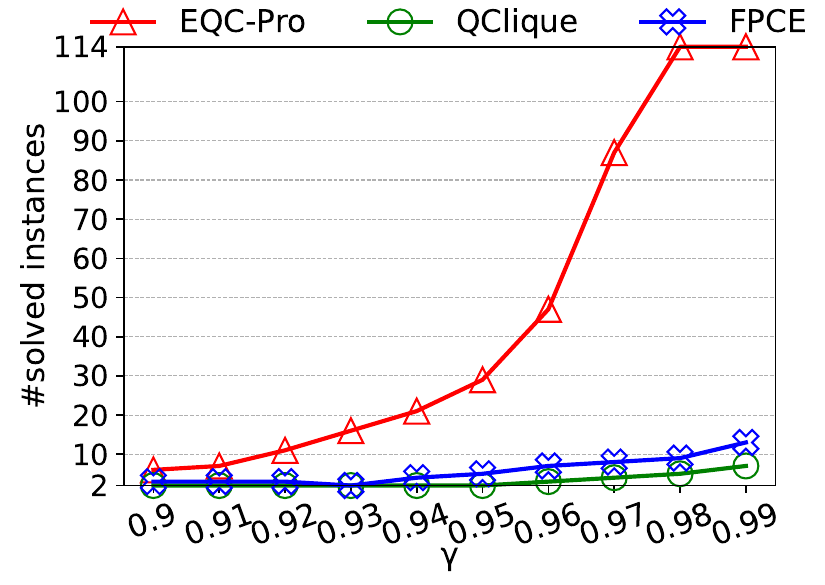}
        \label{fig:facebook3h}
    }
    \hspace{0em}
    \subfigure[Real-World (3-hour limit)]{
        \includegraphics[width=0.22\textwidth]{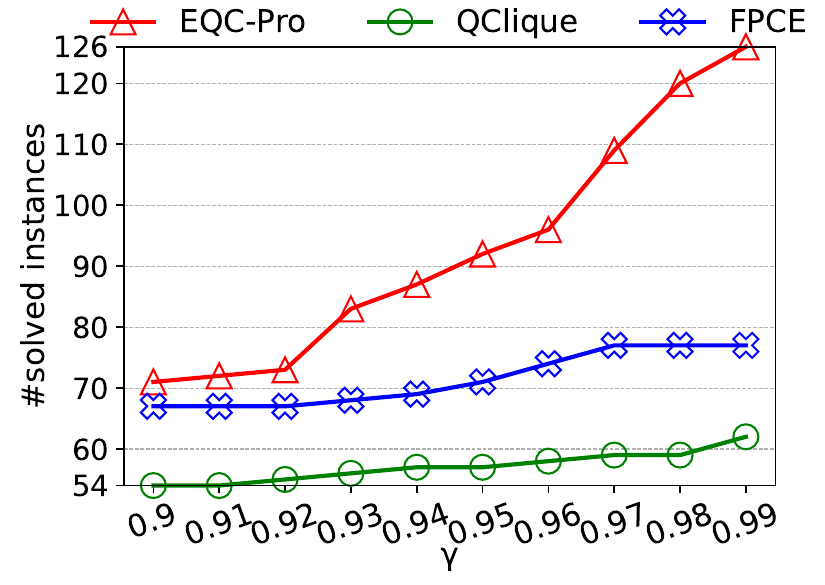}
        \label{fig:realworld3h}
    }
    
    \vspace{-1em} 
    
    \subfigure[Facebook (300-second limit)]{
        \includegraphics[width=0.22\textwidth]{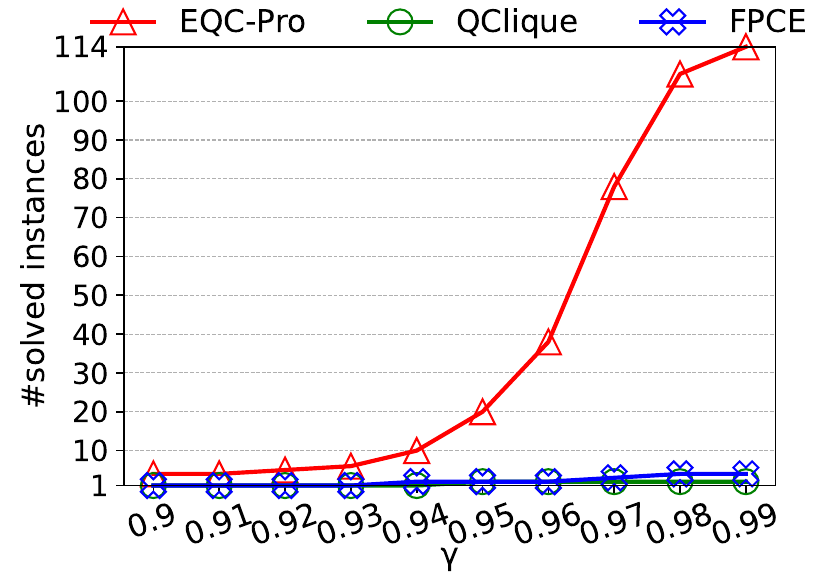}
        \label{fig:facebook300s}
    }
    \hspace{0em}
    \subfigure[Real-World (300-second limit)]{
        \includegraphics[width=0.22\textwidth]{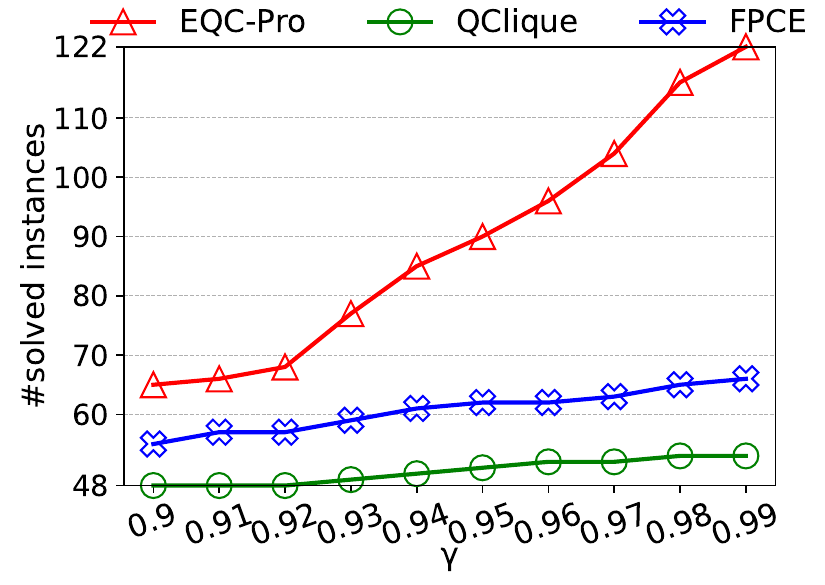}
        \label{fig:realworld300s}
    }
    \vspace{-0.2in}
    \caption{Number of solved instances with varying $\gamma$.}
    \label{Fig:3h}
\end{figure}
\begin{figure}[t]
    \centering
    \subfigure[$\gamma = 0.92$]{
        \includegraphics[width=0.22\textwidth]{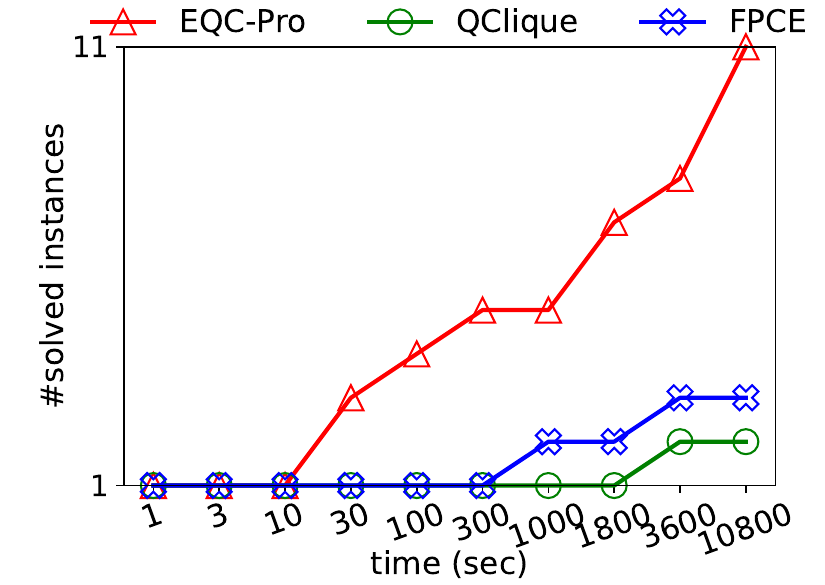}
        \label{fig:facebook092}
    }
    \subfigure[$\gamma = 0.94$]{
        \includegraphics[width=0.22\textwidth]{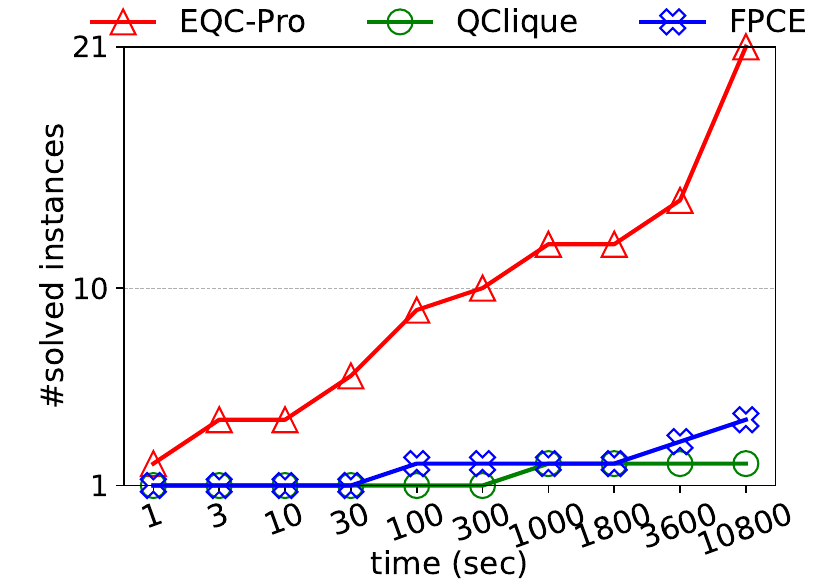}
        \label{fig:facebook094}
    }

    \vspace{-1em} 
    
    \subfigure[$\gamma = 0.96$]{
        \includegraphics[width=0.22\textwidth]{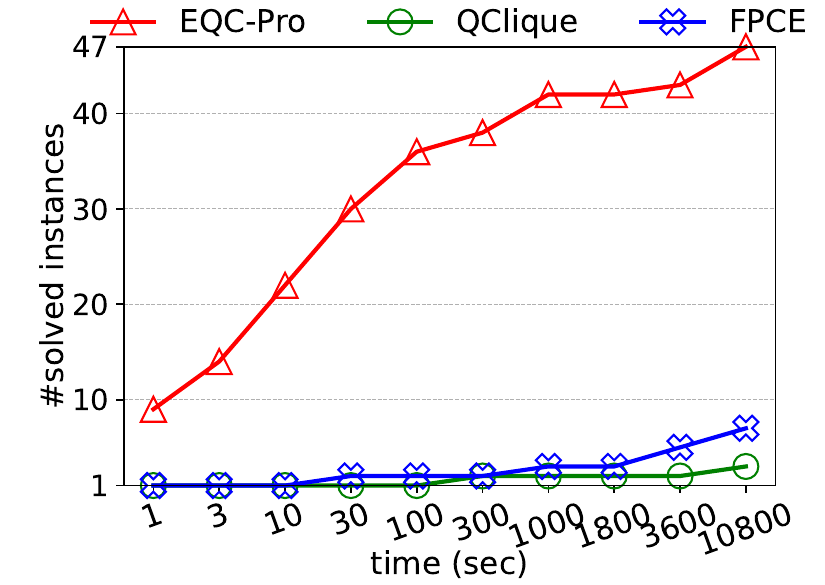}
        \label{fig:facebook096}
    }
    \subfigure[$\gamma = 0.98$]{
        \includegraphics[width=0.22\textwidth]{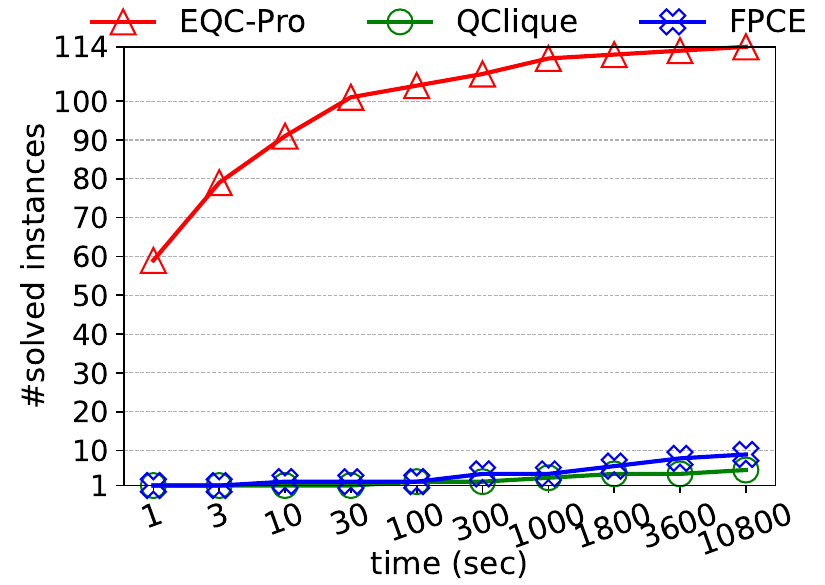}
        \label{fig:facebook098}
    }
    \vspace{-0.2in}
    \caption{Number of solved instances on Facebook.}
    \vspace{-0.17in}
    \label{Fig:facebook_various_gamma}
\end{figure}

\begin{figure}[t]
    \centering
    \subfigure[$\gamma = 0.92$]{
        \includegraphics[width=0.22\textwidth]{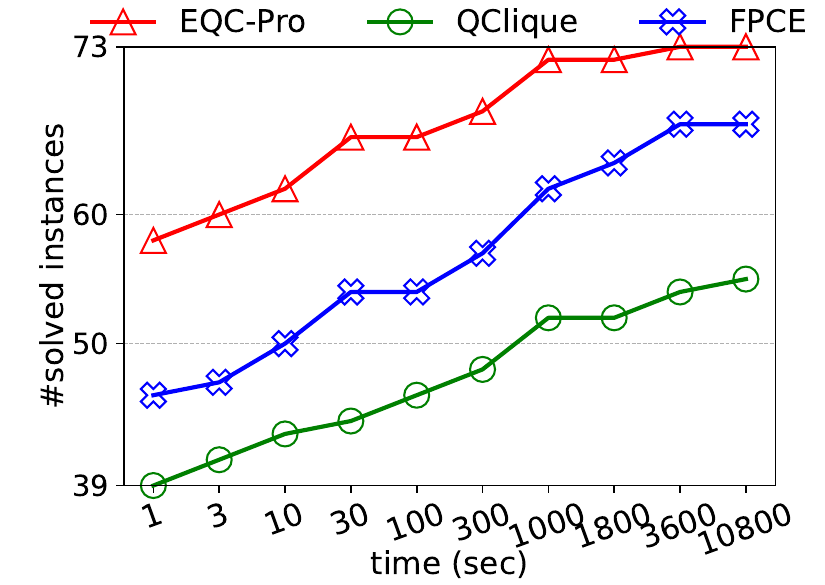}
        \label{fig:realworld092}
    }
    \subfigure[$\gamma = 0.94$]{
        \includegraphics[width=0.22\textwidth]{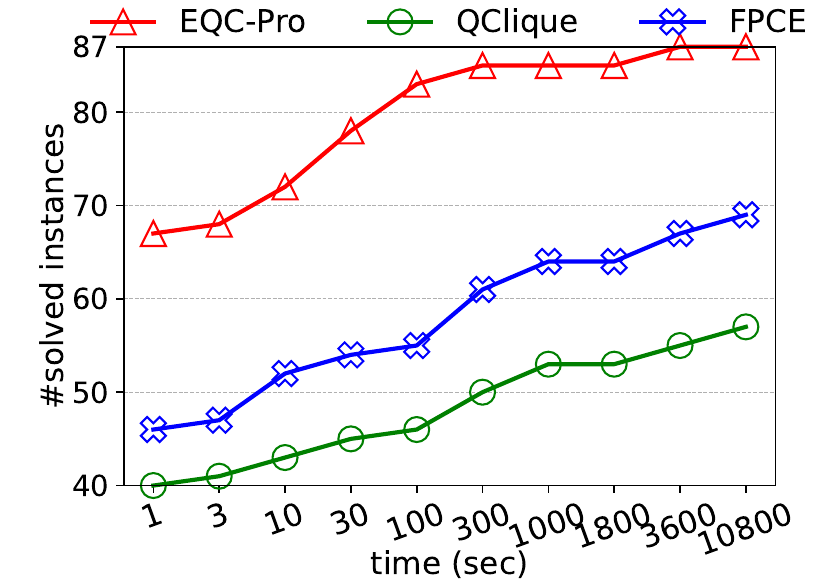}
        \label{fig:realworld094}
    }
    
\vspace{-1em} 
    
    \subfigure[$\gamma = 0.96$]{
        \includegraphics[width=0.22\textwidth]{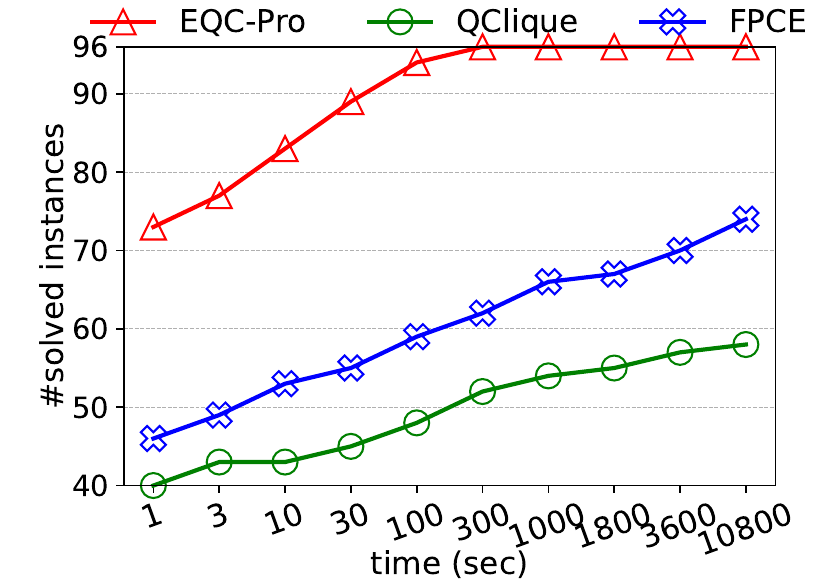}
        \label{fig:realworld096}
    }
    \subfigure[$\gamma = 0.98$]{
        \includegraphics[width=0.22\textwidth]{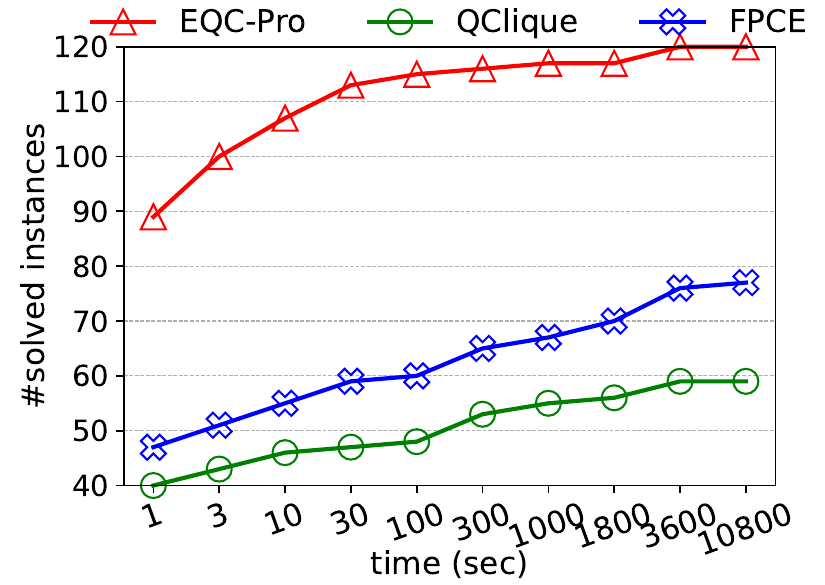}
        \label{fig:realworld098}
    }
    \vspace{-0.2in}
    \caption{Number of solved instances on Real-World.}
    \label{Fig:realworld_various_gamma}
\end{figure}

\begin{table}[t]
    \centering
    \caption{Runtime (in seconds) on 20 large graphs with $\gamma=0.95$.} 
    \label{table:instance095}
    \vspace{-0.17in}
    \scalebox{0.68}{    
    \begin{tabular}{l|lllll|ccc}
    \hline
        Graphs & \texttt{EQC-Pro} & \texttt{EQC-TD} & \texttt{EQC-NH} & \texttt{QClique} & \texttt{FPCE} & $n$ & $m$ &$s^*$ \\ \hline  
        inf-road-usa & 7.16& 7.52& \textbf{5.93} & --& 358.36& 23M& 57M & 4\\
        inf-roadNet-CA & 0.57& 0.62& \textbf{0.49} & --& 3.58& 1.9M& 5M & 4\\
        web-wikipedia2009 & \textbf{8.03} & --& 39.10& --& --& 1.8M& 9M& 34\\
        soc-pokec & \textbf{205.40} & --& 460.59& --& --& 1.6M& 44M & 35\\
        soc-lastfm & \textbf{19.66} & --& 79.49& --& --& 1.1M& 9M& 20\\
        soc-youtube-snap & \textbf{6.88} & --& 20.00& --& --& 1.1M& 5.9M& 22\\
        rt-retweet-crawl & 1.44& \textbf{1.13} & 1.26& --& 11.93& 1.1M& 4.5M& 16\\
        inf-roadNet-PA & \textbf{0.31} & 0.46& 0.39& --& 1.37& 1M& 3M& 4\\
        sc-ldoor & \textbf{211.23} & --& 212.22& --& --& 909K& 41M& 22\\
        soc-delicious & \textbf{0.67} & --& 4.64& --& --& 536K& 2.7M& 30\\
        soc-youtube & \textbf{4.31} & --& 12.79& --& --& 495K& 3M& 22\\
        sc-msdoor & \textbf{101.36} & --& 101.92& --& --& 404K& 18M& 22\\ 
        soc-twitter-follows & 0.24 & \textbf{0.16}& 0.25 & 3335.43& 5.28& 404K& 1.4M& 7\\
        ca-MathSciNet & \textbf{0.18} & --& 0.19& --& 158.14& 332K& 1.6M& 25\\
        sc-pwtk & \textbf{55.5} & --& 100.45& --& -- & 217K& 11M& 26\\
        \hline
        socfb-konect & \textbf{22.82} & 23.38& 28.99& --& 1902.99& 59M& 185M& 7\\
        socfb-uci-uni & \textbf{28.15} & 29.59& 33.39& --& --& 58M& 184M& 7\\
        socfb-A-anon & \textbf{6857.26} & --& --& --& --& 3M& 47M& 37\\
        socfb-B-anon & \textbf{1952.97} & --& --& --& --& 2.9M& 41M& 35\\
        socfb-Maine59 & \textbf{2421.99} & --& 3247.84& --& --& 9K& 486K& 41\\ \hline
    \end{tabular}}
\end{table}

\subsection{Comparison with Existing Algorithms} \label{subsection:compare_with_baseline}
\noindent \underline{\textbf{Number of Solved Instances.}} We note that an instance is considered \emph{solved} only if an exact solution is found within the given time limit. Figure~\ref{Fig:3h} presents the number of instances solved by \texttt{EQC-Pro} and two baseline methods, \texttt{QClique} and \texttt{FPCE}, under 3-hour and 300-second time limits, respectively, with varying $\gamma$ values from $0.9$ to $0.99$. We have the following observations. \textbf{First}, the number of solved instances sharply decreases as $\gamma$ becomes smaller. This is because a lower value of $\gamma$ corresponds to a looser constraint on $\gamma$-EQC, which significantly weakens the effectiveness of pruning and reduction rules in branch-and-bound methods.
\textbf{Second}, \texttt{EQC-Pro} consistently outperforms the two baselines across all tested values of $\gamma$. For example, in Figure~\ref{fig:facebook3h}, when $\gamma=0.98$ on the Facebook collection, \texttt{EQC-Pro} solves all 114 instances, while \texttt{QClique} and \texttt{FPCE} only solve 5 and 9 instances, respectively. For the Real-World collection, at \(\gamma = 0.9\) under the 300-second time limit, \texttt{QClique} and \texttt{FPCE} solve 48 and 55 instances, respectively, whereas \texttt{EQC-Pro} solves 65 instances as shown in Figure~\ref{fig:realworld300s}.
The significant speedup achieved by \texttt{EQC-Pro} results from transforming the problem into a sequence of hereditary subproblems, which substantially accelerates the search process. Compared to the Real-World collection, the Facebook graphs are denser, making the problem more challenging to solve under the same values of $\gamma$. The advantage of \texttt{EQC-Pro} becomes more obvious at higher $\gamma$, where the problem is moderately constrained. However, as $\gamma$ decreases, the search space increases dramatically, resulting in greater computational difficulty and reducing the performance gap between \texttt{EQC-Pro} and the baselines.

Figures~\ref{Fig:facebook_various_gamma} and~\ref{Fig:realworld_various_gamma} illustrate the number of solved instances over time on the Facebook and Real-World collections, respectively, for \(\gamma \in \{0.92, 0.94, 0.96, 0.98\}\). As shown, both baseline methods consistently exhibit poor performance on these datasets regardless of the value of \(\gamma\). On the Facebook collection, when \(\gamma = 0.98\), \texttt{EQC-Pro} successfully solves all 114 instances, whereas the best of the two baselines solves only 9. Moreover, across all tested values of \(\gamma\), \texttt{EQC-Pro} solves more instances within 30 seconds than either baseline does within 3 hours. For example, at \(\gamma = 0.96\), \texttt{EQC-Pro} solves 30 instances in only 30 seconds, compared to only 3 and 7 solved by \texttt{QClique} and \texttt{FPCE}, respectively, even with a 3-hour limit. On the Real-World collection, \texttt{EQC-Pro} also consistently outperforms the two baselines under all tested values of \(\gamma\). Specifically, at \(\gamma = 0.94\), \texttt{EQC-Pro} solves 77 instances within only 30 seconds, while the best baseline solves only 68 instances even with a 3-hour limit.

\noindent \underline{{\textbf{Evaluation on Large Graphs.}}} Table~\ref{table:instance095} reports the performance of the baselines \texttt{QClique} and \texttt{FPCE}, as well as our \texttt{EQC-Pro} (and its variants), on large graphs with $\gamma = 0.95$. Specifically, we select 15 largest graphs from the Real-World collection and 5 largest graphs from the Facebook collection, excluding extremely difficult cases in which all three algorithms fail to solve. The instances are then sorted in descending order of graph size in Table~\ref{table:instance095}. We use ``--'' to indicate a timeout (i.e., 3 hours) and highlight the best results in \textbf{bold}.
It can be observed from Table~\ref{table:instance095} that \texttt{EQC-Pro} consistently outperforms the two baselines, \texttt{QClique} and \texttt{FPCE}, across all instances. Specifically, in 13 out of 20 cases, both baselines fail to produce a solution within 3 hours, whereas \texttt{EQC-Pro} succeeds. For example, on the \texttt{soc-delicious} instance, \texttt{EQC-Pro} completes in 0.67 seconds, while both baselines time out, implying a speedup of at least $1.6\times10^4$. Furthermore, when considering the better of the two baselines, \texttt{EQC-Pro} achieves a speedup of over $1000\times$ in 4 instances, at least $100\times$ in 9 instances, and over $20\times$ in 14 instances.

\noindent \underline{\textbf{Scalability Test.}}
We conduct scalability experiments to evaluate the performance of \texttt{EQC-Pro}. The results demonstrate that \texttt{EQC-Pro} significantly outperforms the baseline algorithms. Detailed experimental settings and results are reported in Appendix~\ref{sec:scale-appendix}.

\subsection{Comparison with Our Variants}
\noindent \underline{\textbf{Evaluation on Large Graphs.}} Table~\ref{table:instance095} reports the performance of \texttt{EQC-TD} and \texttt{EQC-NH} under $\gamma=0.95$. 
We first consider the effect of different iterative frameworks by comparing \texttt{EQC-Pro} and \texttt{EQC-TD}. From Table~\ref{table:instance095}, we know that \texttt{EQC-Pro}, benefiting from its bottom-up doubling iterative framework, can solve 13 additional instances that time out under the top-down iterative method \texttt{EQC-TD}. Specifically, on \texttt{ca-MathSciNet}, it achieves a speedup of at least $6\times10^4$. For the remaining 7 instances where both \texttt{EQC-Pro} and \texttt{EQC-TD} complete within the time limit, the performance of \texttt{EQC-TD} is comparable to that of \texttt{EQC-Pro}. This is mainly due to the small final solution size $s^*$ in these cases. For instance, in \texttt{soc-twitter-follows} and \texttt{socfb-uci-uni}, the maximum size of the 0.95-EQC is only 7, resulting in a limited allowance for missing edges. This tight constraint indirectly improves the accuracy of upper bound estimations, narrowing the performance gap between \texttt{EQC-Pro} and \texttt{EQC-TD}. 

We next consider \texttt{EQC-Heu-Pro} (Algorithm~\ref{alg:heu-extend}) by comparing \texttt{EQC-Pro} and \texttt{EQC-NH}.
The integration of \texttt{EQC-Heu-Pro} leads to a reduction in runtime. Among the 20 instances in Table~\ref{table:instance095}, 17 instances exhibit accelerated performance, with two additional time-out cases successfully solved by \texttt{EQC-Pro}. 
In most cases, incorporating \texttt{EQC-Heu-Pro} yields a clear efficiency gain. We now examine the few cases with less favorable outcomes. In \texttt{inf-road-usa}, for instance, \texttt{EQC-NH} outperforms \texttt{EQC-Pro}, consistent with our earlier observation that the heuristic component \texttt{Degen-Opt}, retained in \texttt{EQC-NH}, can already produce high-quality solutions in certain settings. The speedup is also marginal in \texttt{socfb-uci-uni} and \texttt{socfb-konect}, where \texttt{EQC-Pro} is faster by only about 5 seconds, despite total running times of approximately $28.15$ and $22.82$ seconds, respectively. This limited improvement mainly stems from the dominance of \texttt{Degen-Opt} in the total computation time on large-scale graphs. As both \texttt{EQC-Pro} and \texttt{EQC-NH} retain this component, the observed speedup is less obvious, despite a substantial reduction in the number of required iterations.

\section{Related Work}\label{sec:related-work}
The concept of edge-based \(\gamma\)-quasi-clique ($\gamma$-EQC) was defined in~\cite{abello1998massive,abello2002massive}, and has also been referred to as dense subgraph~\cite{Long2010,Tsourakakis2013}, near-clique~\cite{Brakerski2011,Tadaka2016,komusiewicz2015connected}, or pseudo clique~\cite{rahman2024pseudo}. Existing exact approaches for solving \texttt{MaxEQC} fall into two main categories: mixed integer programming (MIP) and branch-and-bound techniques. Since MIP-based methods~\cite{pattillo2013maximum,marinelli2020lp,veremyev2016exact} are difficult to scale in practice, we focus on branch-and-bound algorithms. The first branch-and-bound algorithm for this problem was proposed by Pajouh et al.~\cite{pajouh2014branch}, who introduced a novel upper bounding technique and demonstrated strong performance. Subsequently, \texttt{QClique}~\cite{ribeiro2019mqc} incorporated a new upper bound and showed that their approach is competitive with both MIP techniques~\cite{pattillo2013maximum,veremyev2016exact} and other branch-and-bound methods~\cite{pajouh2014branch}. Although \texttt{QClique} achieves competitive performance, its branching scheme is identical to that of the maximal $\gamma$-EQC enumeration method \texttt{PCE}~\cite{Uno2010}, and its pruning strategy relies on a computationally expensive upper bound. \texttt{FPCE}~\cite{rahman2024pseudo} also builds on \texttt{PCE} and specifically targets the maximal $\gamma$-EQC enumeration problem, introducing several refined bounding techniques and achieving state-of-the-art performance in both maximal $\gamma$-EQC enumeration and \texttt{MaxEQC}.
Given that \texttt{MaxEQC} is NP-hard~\cite{pattillo2013maximum}, various heuristics have been proposed for approximate solutions. Early approaches include GRASP-based methods and local search~\cite{abello2002massive,brunato2007quasi}. Later studies explored greedy strategies~\cite{Tsourakakis2013} and, recently, advanced vertex selection and iterative refinement techniques~\cite{wang2021nuqclq,liu2024optimization}.

As mentioned earlier, there is a quasi-clique defined based on vertex degree: a degree-based $\gamma$-quasi-clique requires each vertex in the subgraph connects to at least a \(\gamma\)-proportion of the remaining vertices~\cite{MIH99,sanei-mehri2021largest}. \texttt{DDA}~\cite{Pastukhov2018gamma} introduced a hybrid strategy that combines MIP formulations with an iterative refinement process. More recently, Xia et al.~\cite{xia2025iterqc} proposed \texttt{IterQC}, which reformulates the original problem as a series of simpler dense subgraph problems. Another closely related problem is the corresponding maximal enumeration problem, for which several branch-and-bound methods, including \texttt{Quick}~\cite{liu2008pruning} and \texttt{Quick+}~\cite{khalil2022parallel}, have been developed. Most recently, Yu and Long~\cite{Yu2023} proposed the state-of-the-art method \texttt{FastQC} by introducing new branching and pruning strategies. 

For comprehensive surveys on cohesive subgraphs, see~\cite{Chang2018cosub, Fang2020cosub, Fang2021cosub, Huang2019cosub, Lee2010cosub}.

\section{Conclusion}\label{sec:conclusion}
In this paper, we proposed an efficient iterative algorithm \texttt{EQC-Pro} for the \texttt{MaxEQC} problem. Reducing \texttt{MaxEQC} to hereditary subproblems improves theoretical efficiency, while a bottom-up doubling framework and novel heuristics further boost performance. Experiments show that \texttt{EQC-Pro} outperforms state-of-the-art methods. Future work includes extending \texttt{EQC-Pro} to weighted graphs.

\begin{acks}
This research is partially supported by the Guangdong Basic and Applied Basic Research Foundation (No. 2025A1515060015), by the Shenzhen Science and Technology Program (No. GXWD20231129111306002 and KJZD20231023094701003), by the Major Key Project of Pengcheng Laboratory (No. PCL2024A05), and by the Key Laboratory of Interdisciplinary Research of Computation and Economics (Shanghai University of Finance and Economics), Ministry of Education.
\end{acks}

\clearpage
\bibliographystyle{ACM-Reference-Format}
\bibliography{ref}


\appendix

\section{Omitted Details of Top-down Iterative Framework in Section~\ref{sec:top-down}}\label{sec:top-down-appendix}
\subsection{Our Top-down Iterative Baseline \texttt{EQC-TD}}
We first detail our top-down iterative baseline algorithm \texttt{EQC-TD} (mentioned in Section~\ref{sec:top-down}) and demonstrate its correctness.

\noindent \textbf{\underline{Overview.}} 
We observe that the maximum $\gamma$-EQC $g^*$ is also a maximum $k^*$-defective clique in $G$, where $k^*=\texttt{get-k}(s^*)$, which follows directly from the definitions. 
Motivated by this equivalence, we design the top-down iterative framework that reduces \texttt{MaxEQC} to a sequence of \texttt{kDC} subproblems. 
Since the optimal size $s^*$ is unknown, the framework starts from an initial upper bound and iteratively tightens it by solving the corresponding \texttt{kDC} problem. 
At each iteration, the obtained solution either certifies feasibility or yields a strictly smaller upper bound on $s^*$, thereby narrowing the search space until the optimal $\gamma$-EQC is identified. 
We then detail the algorithm and provide a formal proof of its correctness.

\begin{algorithm}[t]
    \caption{A Top-down Iterative Framework: \texttt{EQC-TD}}\label{alg:top-down-framework}
    \KwIn{A graph $G=(V,E)$ and a real value $0 < \gamma <  1$}
    \KwOut{The size of the maximum $\gamma$-EQC in $G$}  
    $s_0 \gets \texttt{UpperBound}(G)$; $k_1 \gets \texttt{get-k}(s_0)$; $i\gets 1$\;
    \While{true}{
        $s_i \gets$ \texttt{solve-defect}($k_i$)\;
        \If{$k_i = \emph{\texttt{get-k}}(s_i)$}{
            $s^* \gets s_i$; \Return{$s^*$}\;
        }
        $k_{i+1} \gets \texttt{get-k}(s_i)$; $i \gets i+1$\;
    }
\end{algorithm}

\noindent \textbf{\underline{\texttt{EQC-TD} Algorithm.}}  Our top-down iterative framework (named as \texttt{EQC-TD}) is shown in Algorithm~\ref{alg:top-down-framework}. Line 1 initializes an upper bound as \(s_0 \) via \texttt{UpperBound} (described later). This upper bound is iteratively refined within the while-loop (Lines 2-6). In each iteration, \texttt{EQC-TD} assumes the existence of a size‑\(s_{i-1}\) \(\gamma\)-EQC and computes the corresponding upper bound on the number of missing edges, denoted as \(k_i = \texttt{get-k}(s_{i-1})\). Using this value, \texttt{EQC-TD} invokes \texttt{solve-defect} to obtain the maximum size \(s_i\) of a \(k_i\)-defective clique. We show that this procedure provides a tighter upper bound for the optimal solution \(s^*\). The algorithm terminates when the \(k\)-sequence converges, i.e., when \(k_i = \texttt{get-k}(s_i)\) in Line 4, and returns the corresponding size \(s_i\) as the final result in Line 5.

\begin{algorithm}[t]
    \caption{\texttt{UpperBound}$(G)$}\label{alg:upperbound}
    \KwOut{An upper bound $ub$ of $s^*$}
    $ub \gets (1 + \sqrt{1 + 8m/\gamma}) / 2$; $k \gets \texttt{get-k}(ub)$\;
    \While{true}{
        $ub \gets$ \texttt{ComputeUB}$(k)$\;
        \lIf{$\texttt{get-k}(ub) \geq k$}{\textbf{break}}
        $k \gets \texttt{get-k}(ub)$\;
    }
    \Return{$ub$}\;
\end{algorithm}

\noindent \textbf{\underline{Correctness.}}
We next prove the correctness of our top-down iterative framework \texttt{EQC-TD} for solving \texttt{MaxEQC} in Algorithm~\ref{alg:top-down-framework}. We first establish two supporting lemmas.

    \begin{lemma}\label{lem:exit_lb}
        Given a real number $0<\gamma<1$, recall that \(s^*\) denotes the size of the maximum edge-based \(\gamma\)-quasi-clique. If a value \(s\) satisfies the condition $s = \texttt{solve-defect}(\texttt{get-k}(s))$, then we have \(s \leq s^*\).
    \end{lemma}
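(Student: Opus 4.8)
The plan is to show directly that the hypothesis forces the existence of a $\gamma$-EQC of size $s$, after which the claim follows from the maximality of $s^*$. Write $k := \texttt{get-k}(s) = \lfloor (1-\gamma)\binom{s}{2}\rfloor$. By definition, $\texttt{solve-defect}(k)$ is the size of the largest $k$-defective clique in $G$, so the hypothesis $s = \texttt{solve-defect}(\texttt{get-k}(s))$ guarantees that $G$ contains a $k$-defective clique $g$ with $|V(g)| = s$.

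Next I would translate the $k$-defective condition into the $\gamma$-EQC condition. By Definition~\ref{def:defective}, $|E(g)| \ge \binom{s}{2} - k$. Using $k = \lfloor (1-\gamma)\binom{s}{2}\rfloor \le (1-\gamma)\binom{s}{2}$, we obtain
\[
|E(g)| \;\ge\; \binom{s}{2} - (1-\gamma)\binom{s}{2} \;=\; \gamma\binom{s}{2},
\]
so $g$ is a $\gamma$-EQC on $s$ vertices. Since $s^*$ is the size of the maximum $\gamma$-EQC in $G$, this yields $s \le s^*$, as desired.

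An even shorter route, which I would note as an alternative, is to invoke Proposition~\ref{pro:bin-property}(2) in contrapositive form: if $s > s^*$ then $s > \texttt{solve-defect}(\texttt{get-k}(s))$, contradicting the hypothesis $s = \texttt{solve-defect}(\texttt{get-k}(s))$; hence $s \le s^*$. There is no real obstacle here — the only subtlety is the floor in $\texttt{get-k}$, which is handled by the trivial bound $\lfloor x \rfloor \le x$, exactly as in Case 1 of the proof of Proposition~\ref{pro:bin-property}. The lemma is essentially a bookkeeping restatement of the quasi-hereditary machinery already established, so I would keep the argument to the few lines above.
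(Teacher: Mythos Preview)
Your proposal is correct and follows essentially the same approach as the paper: set $k=\texttt{get-k}(s)$, use the hypothesis to obtain a $k$-defective clique $g$ of size $s$, bound $|E(g)|\ge\binom{s}{2}-k\ge\gamma\binom{s}{2}$ via $\lfloor x\rfloor\le x$, and conclude $s\le s^*$ from the maximality of $s^*$. The alternative route through Proposition~\ref{pro:bin-property}(2) is also valid but is not the argument the paper uses here.
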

    \begin{proof}
        Let 
        $k =\texttt{get-k}(s) =\left\lfloor \binom{s}{2} \times (1-\gamma)\right\rfloor$.
        Since we have \(\texttt{solve-defect}(k)=s\), there exists a subgraph $g$ of $G$ with \(|V(g)|=s\) that is a \(k\)-defective clique. Thus, we have
        $
        |E(g)| \;\ge\; \binom{s}{2} \;-\; k
        \;\ge\;
        \binom{s}{2} \;-\;\binom{s}{2}(1-\gamma)
        =
        \binom{s}{2}\,\gamma$,
        which implies that \(g\) is an edge-based \(\gamma\)-quasi‑clique of size \(s\). By the definition of \(s^*\) as the maximum size of any edge-based \(\gamma\)-quasi‑clique, we conclude that \(s \le s^* \).
    \end{proof}

    \begin{lemma}\label{lem:strict_decrease}
        Let \( s^* \) be the optimal size. Let \( ub>s^*\) be an upper bound before an iteration (i.e., an execution of the while-loop in Lines 2-6 of Algorithm~\ref{alg:bottom-up-framework}), where $ub$ may be the $s$ value obtained in the last iteration. Define $k = \texttt{get-k}(ub)$ and $ub'$ as the $s$ value after the iteration, i.e., $ub' = \texttt{solve-defect}(k)$. Then, we have $s^* \le ub' < ub$. In other words, \(ub'\) is a strictly tighter upper bound on \(s^*\).
    \end{lemma}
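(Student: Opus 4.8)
The plan is to establish the two inequalities $s^* \le ub'$ and $ub' < ub$ separately. For the lower bound $s^* \le ub'$: by definition $ub' = \texttt{solve-defect}(k)$ with $k = \texttt{get-k}(ub)$. The maximum $\gamma$-EQC $g^*$ has $s^*$ vertices, and since $ub > s^*$, we have $k = \lfloor(1-\gamma)\binom{ub}{2}\rfloor \ge \lfloor(1-\gamma)\binom{s^*}{2}\rfloor = \texttt{get-k}(s^*)$, because $\binom{\cdot}{2}$ is monotone nondecreasing. Hence $g^*$ is itself a $k$-defective clique of size $s^*$ (it has at most $\texttt{get-k}(s^*) \le k$ missing edges), so $\texttt{solve-defect}(k) \ge s^*$, i.e.\ $ub' \ge s^*$. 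This reuses exactly the feasibility-construction idea of Case 1 of Proposition~\ref{pro:bin-property}, only simpler since we already have a $\gamma$-EQC of size $s^*$.

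For the strict upper bound $ub' < ub$: suppose for contradiction $ub' \ge ub$. Let $g'$ be the $k$-defective clique of size $ub'$ returned by $\texttt{solve-defect}(k)$. Then $|E(g')| \ge \binom{ub'}{2} - k = \binom{ub'}{2} - \lfloor(1-\gamma)\binom{ub}{2}\rfloor$. Since $ub' \ge ub$ implies $\binom{ub}{2} \le \binom{ub'}{2}$, we get $|E(g')| \ge \binom{ub'}{2} - \lfloor(1-\gamma)\binom{ub'}{2}\rfloor \ge \gamma\binom{ub'}{2}$, so $g'$ is a $\gamma$-EQC of size $ub' \ge ub > s^*$, contradicting the maximality of $s^*$. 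Therefore $ub' < ub$. This is essentially the contrapositive packaging of Case 2 of Proposition~\ref{pro:bin-property}, applied with $s = ub$.

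The only mild subtlety — and the step I would be most careful about — is the monotonicity bookkeeping around the floor function: one must confirm that $ub > s^*$ (an inequality between integers, so $ub \ge s^* + 1$) indeed yields $\texttt{get-k}(ub) \ge \texttt{get-k}(s^*)$ and, in the second part, that $ub' \ge ub$ yields $\lfloor(1-\gamma)\binom{ub}{2}\rfloor \le \lfloor(1-\gamma)\binom{ub'}{2}\rfloor$. Both follow immediately from the fact that $x \mapsto \binom{x}{2}$ is nondecreasing on the nonnegative integers and that $\lfloor\cdot\rfloor$ is nondecreasing, but it is worth stating explicitly so the floor does not obscure the argument. Once these monotonicity facts are in hand, the two inequalities drop out with no real computation, and the lemma follows.
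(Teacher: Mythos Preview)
Your proof is correct and follows essentially the same approach as the paper's: both inequalities are obtained from the ideas behind Proposition~\ref{pro:bin-property}, with the lower bound coming from the monotonicity of \texttt{get-k} (so $g^*$ is itself a $k$-defective clique) and the upper bound coming from the contradiction argument of Case~2. The paper's write-up is slightly more compressed, invoking Proposition~\ref{pro:bin-property} directly together with the monotonicity of \texttt{solve-defect}, whereas you unpack the definitions; but the substance is the same.
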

\begin{proof}
    Since \texttt{get-k} and \texttt{solve-defect} are non-decreasing, by Proposition~\ref{pro:bin-property}, we know $\texttt{solve-defect}(\texttt{get-k}(s^*)) \ge s^*$. 
    Applying this to \(ub > s^*\), we obtain $ub' = \texttt{solve-defect}(\texttt{get-k}(ub)) \ge \texttt{solve-defect}(\texttt{get-k}(s^*))\ge s^*$
    On the other hand, since $ub > s^*$ and with Proposition~\ref{pro:bin-property}, we have $ub' = \texttt{solve-defect}(\texttt{get-k}(ub)) < ub$.
Combining the two bounds yields $s^* \le ub' < ub$.
\end{proof}

\begin{proposition}
    Algorithm~\ref{alg:top-down-framework} can correctly computes the size of the maximum edge-based $\gamma$-quasi-clique in finite steps.
\end{proposition}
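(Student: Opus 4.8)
The plan is to establish two things: (i) partial correctness---whenever Algorithm~\ref{alg:top-down-framework} halts, the returned value equals $s^*$; and (ii) termination in finitely many iterations. For partial correctness, I would maintain the loop invariant that at the start of iteration $i$, the quantity $s_{i-1}$ (i.e., the input to \texttt{get-k} producing $k_i$) is an upper bound on $s^*$, i.e., $s_{i-1} \ge s^*$. The base case holds because \texttt{UpperBound}$(G)$ returns a valid upper bound on $s^*$ (this can be justified separately: the initial crude bound $(1+\sqrt{1+8m/\gamma})/2$ comes from $\gamma\binom{s}{2}\le m$, and each \texttt{ComputeUB} refinement preserves the upper-bound property, exactly as in Lemma~\ref{lem:strict_decrease}'s spirit). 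For the inductive step, if $s_{i-1} > s^*$ strictly, then Lemma~\ref{lem:strict_decrease} (applied with $ub = s_{i-1}$, $ub' = s_i$) gives $s^* \le s_i < s_{i-1}$, so the invariant is preserved and moreover the sequence strictly decreases. If instead $s_{i-1} = s^*$, I must check the algorithm behaves correctly; by Proposition~\ref{pro:bin-property}(1) with $s = s^* \le s^*$ we get $s_i = \texttt{solve-defect}(\texttt{get-k}(s^*)) \ge s^*$, and combined with the invariant $s_i \le s_{i-1} = s^*$ we get $s_i = s^*$, so $\texttt{get-k}(s_i) = \texttt{get-k}(s^*) = k_i$ and the termination test in Line 4 fires, returning $s^* $ correctly.

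Next I handle the general exit condition: suppose the loop exits at iteration $i$ because $k_i = \texttt{get-k}(s_i)$ in Line 4. Since $s_i = \texttt{solve-defect}(k_i) = \texttt{solve-defect}(\texttt{get-k}(s_i))$, Lemma~\ref{lem:exit_lb} gives $s_i \le s^*$. But the invariant (now carried to $s_i$, since $s_i$ serves as the candidate upper bound) gives $s_i \ge s^*$; more carefully, from the previous iteration we have $s_{i-1} \ge s^*$ and Lemma~\ref{lem:strict_decrease} (or Proposition~\ref{pro:bin-property}) gives $s_i \ge s^*$. Hence $s_i = s^*$, so the returned value is exactly $s^*$. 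This settles partial correctness.

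For termination, I use the strictly-decreasing-and-bounded argument: as long as the loop does not exit, we are in the case $s_{i-1} > s^*$, and Lemma~\ref{lem:strict_decrease} yields $s_i < s_{i-1}$ with both integers and $s_i \ge s^*$. Thus $s_0 > s_1 > s_2 > \cdots \ge s^*$ is a strictly decreasing sequence of integers bounded below by $s^*$, so after at most $s_0 - s^*$ iterations we must reach $s_i = s^*$, at which point (as shown above) Line 4 triggers and the algorithm halts. Therefore Algorithm~\ref{alg:top-down-framework} halts within at most $s_0 - s^* \le n$ iterations and returns $s^*$.

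\textbf{Main obstacle.} The subtle point is the boundary case $s_{i-1} = s^*$ versus $s_{i-1} > s^*$: the lemmas as stated (Lemma~\ref{lem:strict_decrease} assumes $ub > s^*$ strictly) do not directly cover equality, so I must argue separately---via Proposition~\ref{pro:bin-property}(1)---that when the running candidate first equals $s^*$, the iteration reproduces $s^*$ and triggers the stopping test rather than overshooting or looping. A secondary obstacle is justifying that \texttt{UpperBound}$(G)$ indeed returns a genuine upper bound on $s^*$ (and itself terminates), since the main proposition's "finite steps" claim implicitly depends on it; this requires a short monotonicity argument on the \texttt{ComputeUB} refinement loop, analogous to the one used for the main algorithm.
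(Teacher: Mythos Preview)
Your proposal is correct and follows essentially the same route as the paper: Lemma~\ref{lem:exit_lb} gives $s_i \le s^*$ at termination, Lemma~\ref{lem:strict_decrease} gives strict decrease with $s_i \ge s^*$ before termination, and the bounded strictly decreasing integer sequence forces halting. Your treatment is in fact more careful than the paper's about the boundary case $s_{i-1}=s^*$ and about the dependence on \texttt{UpperBound} (which the paper handles separately in Lemma~\ref{lem:ub-decrease}). One small slip: in the boundary case you invoke ``the invariant $s_i \le s_{i-1}$,'' but your stated invariant is only $s_{i-1}\ge s^*$, and Lemma~\ref{lem:strict_decrease} does not apply when $s_{i-1}=s^*$; the clean fix is to observe directly that if $s_i>s^*$ then the returned $\texttt{get-k}(s^*)$-defective clique of size $s_i$ would be a $\gamma$-EQC larger than $s^*$ (same argument as Case~2 of Proposition~\ref{pro:bin-property}), a contradiction.
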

\begin{proof}
    By Lemma~\ref{lem:exit_lb}, whenever the termination condition of the while-loop (Line 4) is satisfied, the returned value satisfies \(s_i \le s^*\). By Lemma~\ref{lem:strict_decrease}, as long as the termination condition is not met, the sequence \(\{s_0, s_1, \ldots, s_i\}\) is strictly decreasing and remains an upper bound on \(s^*\). Further, the $s$ value at each iteration satisfies \(s_i \ge s^*\) until the termination condition is triggered. Thus, the only possibility at termination is that \(s_i = s^*\) and the algorithm can be terminated in finite steps.
\end{proof}
\subsection{Upper Bound and Correctness}
Note that the proof in the previous subsection reveals that \texttt{EQC-TD} essentially iteratively computes tighter upper bounds. Thus, it is possible to compute an initial upper bound before starting the iteration. In this subsection, we introduce the algorithm for solving the upper bound in \texttt{EQC-TD} and provide a proof of its correctness.

\noindent \textbf{\underline{Upper Bound.}} 
An upper bound is commonly used in the literature for \texttt{MaxEQC}~\cite{veremyev2016exact, pattillo2013maximum}, which is used to estimate the size of the maximum \(\gamma\)-EQC by solving \(m \geq \gamma \cdot \binom{ub}{2}\), yielding \(ub \leq (1 + \sqrt{1 + 8m/\gamma}) / 2\). However, this bound is often overly coarse. To obtain a tighter upper bound, we incorporate the coloring-based upper bounding strategy from the \texttt{kDC} problem~\cite{chang2023kdefective,dai2024theoretically} into \emph{an iterative framework tailored for the $\gamma$-EQC setting}. 
Our method for computing upper bounds for \texttt{MaxEQC} is based on a key observation. In our top-down iterative framework (Algorithm~\ref{alg:top-down-framework}), we maintain upper bounds for the optimal values \(k^* = \texttt{get-k}(s^*)\) and \(s^*\) at each iteration. By consistently choosing \(k\) and \(s\) as upper bounds, each iteration produces progressively tighter upper bounds. Thus, \emph{if the \(k\) value in some iteration is an upper bound of \(k^*\), then the upper bound computed for the corresponding \texttt{kDC} problem is also an upper bound of \(s^*\)}. 
Based on this, we summarize our upper bound method \texttt{UpperBound} in Algorithm~\ref{alg:upperbound}, where \texttt{ComputeUB} is an upper bound estimation procedure for \texttt{kDC}~\cite{chang2023kdefective,dai2024theoretically}. 
Specifically, \texttt{UpperBound} first initializes a trivial $ub$ as $(1 + \sqrt{1 + 8m/\gamma}) / 2$ and sets the corresponding value of $k$ in Line 1. Then, \texttt{UpperBound} iteratively invokes \texttt{ComputeUB} with different values of $k$ (Lines 2-6 of Algorithm~\ref{alg:upperbound}) until the condition $\texttt{get-k}(ub) \geq k$ is satisfied in Line 4. Finally, we return $ub$ as the upper bound of $s^*$ in Line 6.

We first present the existing \texttt{ComputeUB} algorithm~\cite{chang2023kdefective,dai2024theoretically} for computing an upper bound for \texttt{kDC}, which is used in our upper bound computation algorithm \texttt{UpperBound} (Algorithm~\ref{alg:upperbound}).
We then prove that in our \texttt{UpperBound}, the iterative procedure (Lines 2-5 of Algorithm~\ref{alg:upperbound}) produces a strictly decreasing sequence of upper bounds, each of which remains a valid upper bound of the optimal value~$s^*$, i.e., the size of the maximum $\gamma$-EQC. This result can also be used to demonstrates the correctness of our \texttt{UpperBound}.

 \begin{algorithm}[t]    \caption{\texttt{ComputeUB}$(k)$~\cite{chang2023kdefective,dai2024theoretically}}\label{alg:computeub}
    \KwOut{An upper bound of the size of the maximum $k$-defective clique.}
    $\text{colorCount[]}\gets \texttt{GreedyColor}(G)$\;
    Let $\text{numColors}$ be the total number of colors used\;
    $\text{maxCount} \gets 0$\;
    \ForEach{$c = 1$ \emph{to} $\text{numColors}$}{
        $\text{maxCount} \gets \max(\text{maxCount}, \text{colorCount}[c])$\;
    }

    Perform a binary search over $l \in [0, \text{maxCount}]$ to find the largest $l$ such that $\sum_{c=1}^{\text{numColors}} \binom{m_c}{2} \leq k, \text{ where } m_c = \min(l, \text{colorCount}[c])$\;

    $ub \gets 0$\;
    \ForEach{$c = 1$ \emph{to} $\text{numColors}$}{
        $m_c \gets \min(l, \text{colorCount}[c])$\;
        $k \gets k - \binom{m_c}{2}$\;
        $ub \gets ub + m_c$\;
    }
    $ub \gets ub + \lfloor k / (l+1) \rfloor$\;
    \Return{$ub$}\;

    \SetKwFunction{FGreedyColor}{GreedyColor}
    \SetKwProg{Fn}{Function}{:}{}
    \Fn{\FGreedyColor{$G$}}{
        \ForEach{vertex $u$ in reverse degeneracy order}{
            Assign to $u$ the smallest color that has not been assigned to any of its colored neighbors\;
        }
        \Return an array containing the number of vertices assigned to each color\;
    }
\end{algorithm}

\noindent \textbf{\underline{\texttt{ComputeUB} Algorithm.}} The key idea of \texttt{ComputeUB} is to apply a vertex coloring strategy that assigns different colors to adjacent vertices, partitioning the graph into independent sets (i.e., each color class consists non-adjacent vertices). \texttt{ComputeUB} focuses only on missing edges within each color class. During the selection process, vertices are greedily added such that each new vertex introduces the fewest missing edges among already selected vertices of the same color. This process continues until the total number of missing edges reaches or exceeds the threshold $k$. Since missing edges between different color classes are ignored, the number of selected vertices provides an upper bound on the size of the maximum $k$-defective clique. We remark that since the greedy selection is based on a fixed coloring order rather than $k$, \texttt{ComputeUB} is monotonic in $k$: for $k' > k$, $\texttt{ComputeUB}(k') \ge \texttt{ComputeUB}(k)$.

\texttt{ComputeUB} is shown in Algorithm~\ref{alg:computeub}. The graph is first greedily colored in Line 1. Then, vertices are selected greedily to minimize new missing edges within each color class. To improve the efficiency, we use a batch-wise selection strategy: initially, up to $l$ vertices per color class are selected, followed by at most one more vertex per class. Since the coloring order is fixed, $l$ can be efficiently found via binary search (Lines 2-6), after which any remaining selections are made iteratively (Lines 7--12). Note that as long as the coloring order is fixed, the correctness is guaranteed. In our implementation, we use the reverse degeneracy ordering for coloring.

\noindent \textbf{\underline{Correctness.}} 
We first note that \texttt{ComputeUB} (Algorithm~\ref{alg:computeub}) is monotone, i.e., for any $k_1 \geq k_2$, we have $\texttt{ComputeUB}(k_1) \geq \texttt{ComputeUB}(k_2)$.
Based on this monotonicity, we present the following lemma.
\begin{lemma} \label{lem:ub-decrease}
During the iterative process (Lines 2-5) of Algorithm~\ref{alg:upperbound}, the computed upper bound $ub$ forms a non-increasing sequence and eventually terminates at a value that is an upper bound of $s^*$, i.e., the size of the maximum $\gamma$-EQC.
\end{lemma}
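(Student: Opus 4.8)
The plan is to establish the two claimed properties of Lemma~\ref{lem:ub-decrease} in sequence: first monotonicity (the sequence of $ub$ values is non-increasing), then the loop invariant (each $ub$ in the sequence remains a valid upper bound of $s^*$), and finally termination. I would begin by fixing notation: let $ub^{(0)}$ be the initial value $(1+\sqrt{1+8m/\gamma})/2$ set in Line~1, and let $k^{(0)} = \texttt{get-k}(ub^{(0)})$; then for $j \ge 1$ let $ub^{(j)} = \texttt{ComputeUB}(k^{(j-1)})$ and $k^{(j)} = \texttt{get-k}(ub^{(j)})$, tracking exactly what Lines~3 and~5 compute.

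For the \textbf{invariant step}, I would argue by induction on $j$ that $ub^{(j)} \ge s^*$. The base case is the classical counting bound: any $\gamma$-EQC of size $s$ must have at least $\gamma\binom{s}{2}$ edges, all of which lie in $G$, so $m \ge \gamma\binom{s^*}{2}$, which rearranges to $s^* \le ub^{(0)}$. For the inductive step, I would use the key observation already stated in the text: if $k^{(j-1)} \ge k^* := \texttt{get-k}(s^*)$, then since $g^*$ is a $k^*$-defective clique (a $\gamma$-EQC of size $s^*$ has at most $\binom{s^*}{2} - \gamma\binom{s^*}{2} \le k^*$ missing edges by the floor), it is also a $k^{(j-1)}$-defective clique, so the true maximum $k^{(j-1)}$-defective clique has size at least $s^*$; and $\texttt{ComputeUB}$ is a valid upper bound for that quantity, giving $ub^{(j)} \ge s^*$. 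It then remains to verify that $k^{(j-1)} \ge k^*$ holds throughout, which follows because $\texttt{get-k}$ is non-decreasing and, by the inductive hypothesis, $ub^{(j-1)} \ge s^*$, so $k^{(j-1)} = \texttt{get-k}(ub^{(j-1)}) \ge \texttt{get-k}(s^*) = k^*$. (For $j=1$ one uses $ub^{(0)} \ge s^*$ from the base case.)

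For the \textbf{monotonicity step}, I would show $ub^{(j)} \le ub^{(j-1)}$ for $j \ge 1$. The loop continues past iteration $j$ only when the break condition $\texttt{get-k}(ub^{(j)}) \ge k^{(j-1)}$ fails, i.e., when $k^{(j)} = \texttt{get-k}(ub^{(j)}) < k^{(j-1)}$. Combined with the monotonicity of $\texttt{ComputeUB}$ noted in the excerpt, this gives $ub^{(j+1)} = \texttt{ComputeUB}(k^{(j)}) \le \texttt{ComputeUB}(k^{(j-1)}) = ub^{(j)}$, so each new term does not exceed the previous one; I would also need to handle the first step, checking $ub^{(1)} \le ub^{(0)}$ separately, which holds because $\texttt{ComputeUB}(k^{(0)})$ is an upper bound on a $k$-defective clique size and hence at most the trivial counting bound $ub^{(0)}$ (or, more carefully, because the greedy color-class selection never selects more vertices than $ub^{(0)}$ permits). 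For \textbf{termination}, since the $ub$ values are integers (after the implicit floor/ceiling in the size estimates) bounded below by $s^* \ge 1$ and non-increasing, the sequence stabilizes; once $ub^{(j)} = ub^{(j-1)}$ we get $k^{(j)} = k^{(j-1)}$, so $\texttt{get-k}(ub^{(j)}) = k^{(j-1)} \ge k^{(j-1)}$ and the break condition in Line~4 triggers.

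The main obstacle I anticipate is the bookkeeping around the floor functions and the precise interaction between the break condition $\texttt{get-k}(ub) \ge k$ and the claim that the loop genuinely makes progress: one has to be careful that the sequence is \emph{strictly} decreasing until termination (to get finiteness cleanly) versus merely non-increasing, and that the value at termination is still $\ge s^*$ rather than having overshot downward in the last step — this is exactly why the invariant must be maintained \emph{simultaneously} with monotonicity rather than proved afterward. A secondary subtlety is justifying $ub^{(1)} \le ub^{(0)}$, since $ub^{(0)}$ comes from a different (counting) bound than the coloring-based $\texttt{ComputeUB}$; I would resolve this by noting that any $k$-defective clique in $G$ is in particular a subgraph with at least $\binom{s}{2}-k \ge \gamma\binom{s}{2}$ edges when $k = \texttt{get-k}(s)$, hence a $\gamma$-EQC, so its size is at most $ub^{(0)}$, and $\texttt{ComputeUB}(k^{(0)})$ is an upper bound on that size.
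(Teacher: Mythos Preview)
Your proposal follows essentially the same route as the paper: both maintain the invariant $ub \ge s^*$ by induction (using that $g^*$ is a $k$-defective clique for every $k \ge k^* = \texttt{get-k}(s^*)$ together with monotonicity of \texttt{get-k}), and both deduce non-increase of the $ub$ sequence from the fact that the loop continues only when the new $k$ is strictly smaller, combined with monotonicity of \texttt{ComputeUB}. One small difference: the paper handles termination by observing that $k$ itself is a non-negative integer that strictly decreases at every non-terminating step, so finiteness is immediate; your route via stabilization of the $ub$ sequence also works but is a bit more roundabout.

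The one place your argument actually breaks is the attempted justification of $ub^{(1)} \le ub^{(0)}$. You correctly show that the \emph{true} maximum $k^{(0)}$-defective clique size is at most $ub^{(0)}$ (any such clique of size $s \ge ub^{(0)}$ would be a $\gamma$-EQC, bounded by the counting bound), but you then write ``$\texttt{ComputeUB}(k^{(0)})$ is an upper bound on that size'' as though this yields $\texttt{ComputeUB}(k^{(0)}) \le ub^{(0)}$ --- it gives the opposite inequality. In fact the paper does not prove (or need) $ub^{(1)} \le ub^{(0)}$: it reads the lemma's ``non-increasing sequence'' as the values produced \emph{within} Lines~2--5, i.e., $ub^{(1)}, ub^{(2)}, \ldots$, and its monotonicity argument, like the main part of yours, uses $ub^{(j)} = \texttt{ComputeUB}(k^{(j-1)})$, which holds only from $j=1$ onward. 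Drop the first-step claim and your proof aligns with the paper's.
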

\begin{proof}
Algorithm~\ref{alg:upperbound} iteratively computes a sequence of upper bounds on optimal value $s^*$. We prove that upon termination, the final upper bound is a valid upper bound on $s^*$, and that the algorithm terminates in finitely many steps.
Initially, in Line 1 of Algorithm~\ref{alg:upperbound}, the value of $k$ is set by assuming that the optimal value is $ub$. Specifically, the algorithm computes the maximum number of missing edges allowed in a $\gamma$-EQC of size $ub$, which serves as an upper bound on the number of missing edges in the corresponding maximum $k^*$-defective clique, i.e., $k^* = \texttt{get-k}(s^*)$.

Consider two consecutive iterations. Let $ub_1$ denote the current upper bound of the optimal value $s^*$ at the beginning of an iteration, and let $k_1$ be the corresponding upper bound on the number of missing edges, i.e., $k_1 \ge k^*$. Let $k_2 = \texttt{get-k}(ub_1)$ be the maximum number of missing edges associated with a $\gamma$-EQC of size $ub_1$, and let $ub_2 = \texttt{ComputeUB}(k_2)$ be the new upper bound computed for the next iteration. In each subsequent iteration, the algorithm updates $ub_1 \leftarrow ub_2$ and $k_1 \leftarrow k_2$.
Observe that the while loop proceeds only if $k_2=\texttt{get-k($ub_1$)} < k_1$, ensuring that the value of $k$ strictly decreases with each iteration. Since the function \texttt{ComputeUB} is monotonic with respect to $k$, 
\(
ub_2 = \texttt{ComputeUB}(k_2) \le \texttt{ComputeUB}(k_1) = ub_1,
\)
and thus, the sequence of upper bounds is non-increasing.

Moreover, \texttt{ComputeUB} returns a valid upper bound on the size of the maximum $k$-defective clique for the given $k$. By construction, $k_2=\texttt{get-k}(ub_1) \ge \texttt{get-k}(s^*) = k^*$, and thus $ub_2=\texttt{ComputeUB}(k_2) \ge  \texttt{ComputeUB}(k^*)= s^*$ throughout the process.
Since $k$ is a non-negative integer that can only take finitely many values bounded below by $k^*$, and since $k$ strictly decreases in each iteration, the algorithm must terminate after finitely many steps. Upon termination, the final $ub$ remains a valid upper bound on $s^*$.
\end{proof}

\begin{algorithm}[t]
    \caption{Auxiliary Subroutines of \texttt{EQC-Heu}}\label{alg:heu-submodules}

    \SetKwProg{Fn}{Function}{:}{}

    \SetKwFunction{FAddBestVertex}{AddBestVertex}
    \Fn{\FAddBestVertex{$S,C$}}{
        $P \gets \{u \in C \mid d_G^S(u) = \max_{v \in C} d_G^S(v)\}$\;
        $P' \gets \{u \in P \mid \text{score}(u) = \max_{v \in P} \text{score}(v)\}$\;
        $u \gets$ a randomly chosen vertex from $P'$\;
        move $u$ from $C$ to $S$\;
        score[$u$] $\gets 0$\;
    }

    \SetKwFunction{FRemoveWorstVertex}{RemoveWorstVertex}
    \Fn{\FRemoveWorstVertex{$S,C$}}{
        $P \gets \{u \in S \mid d_G^S(u) = \min_{v \in S} d_G^S(v)\}$\;
        $P' \gets \{u \in P \mid \text{score}(u) = \min_{v \in P} \text{score}(v)\}$\;
        $u \gets$ a randomly chosen vertex from $P'$\;
        move $u$ from $S$ to $C$\;
        score[$u$] $\gets 0$\;
    }

    \SetKwFunction{FUpdateScore}{UpdateScore}
    \Fn{\FUpdateScore{$S,C$}}{
        \ForEach{vertex $v \in S$}{
            score[$v$] $\gets$ score[$v$] $+ d_G^C(v) - \Delta_G$\;
        }
        \ForEach{vertex $v \in C$}{
            score[$v$] $\gets$ score[$v$] $+ d_G^S(v)$\;
        }
    }
\end{algorithm}

\section{Score Functions in \texttt{EQC-Heu} (Algorithm~\ref{alg:subheu})}\label{sec:score_function_appendix} 
For completeness, we present the auxiliary subroutines used in \texttt{EQC-Heu} (Algorithm~\ref{alg:subheu} in Section~\ref{sec:heu}). We note that our \texttt{EQC-Heu} uses the score function from~\cite{wang2021nuqclq}. The auxiliary subroutines are summarized in Algorithm~\ref{alg:heu-submodules}. 
The procedure \texttt{AddBestVertex} (Lines 1–6) selects from the candidate set $C$ the vertex with the largest degree towards $S$, breaking ties by the score function, and then moves it into $S$ with its score reset. 
Symmetrically, \texttt{RemoveWorstVertex} (Lines 7–12) identifies the vertex in $S$ with the smallest internal degree, again breaking ties by score, and moves it back to $C$ with its score reset. 
Finally, \texttt{UpdateScore} (Lines 13–17) updates the scores of all vertices: vertices in $S$ are adjusted according to their connections to $C$ relative to $\Delta_G$, while vertices in $C$ are updated based on their degree towards $S$. 
Together, these subroutines support the main heuristic process by guiding vertex movements and maintaining score information during the iterations.

\section{Omitted Details of \texttt{EQC-Pro} in Section~\ref{sec:EQC-pro}} \label{sec:EQC-pro-appendix}
\begin{algorithm}[t]
  \caption{Our Final Algorithm: \texttt{EQC-Pro}}
  \label{alg:improved-framework-memo}
\KwIn{A graph $G=(V,E)$ and a real value $0 < \gamma < 1$}
\KwOut{The size of the maximum $\gamma$-EQC in $G$}
  \textbf{global} \textit{memo} \textbf{as} map from $k$ to $s'$\;
  $g \gets \texttt{Degen-Opt}(G, \gamma)$\; 
  $g \gets \texttt{EQC-Heu-Pro}(G, g, \gamma)$\;
  $gap \gets 1$; $lb,lb_{new} \gets |V(g)|$;    $ub \gets |V|$\;

    \While{$lb+gap\le ub$}{
    $s \gets lb + gap$; $k \gets \texttt{get-k}(s)$\;
    $s' \gets \texttt{MemoizedSearch}(k)$\;
    \lIf{$s' < s$}{\textbf{break}}
    $lb_{new} \gets s'$\;
    $gap \gets 2 \times gap$\;
  }
  $ub_{\text{new}} \gets  lb + gap - 1$\;

  \While{$lb_{\text{new}} \neq ub_{\text{new}}$}{
    $mid \gets \left\lceil (lb_{\text{new}} + ub_{\text{new}})/2 \right\rceil$; $k \gets \texttt{get-k}(mid)$\;
    $s' \gets \texttt{MemoizedSearch}(k)$\;
    \eIf{$s' \ge mid$}{
      $lb_{\text{new}}\gets mid $\;
    }
    {
      $ub_{\text{new}} \gets mid - 1$\;
    }
  }
  \Return{$lb_{\text{new}}$}\;
  \SetKwFunction{FMain}{MemoizedSearch}
  \SetKwProg{Fn}{Function}{:}{}
  \Fn{\FMain{$k$}}{
    \eIf{$k \in \textit{memo}$}{ \tcp{memoization}
      $s' \gets \textit{memo}[k]$\;
    }{
      $s' \gets \texttt{solve-defect}(k)$\;
      $\textit{memo}[k] \gets s'$\;
    }
  $g \gets$ the graph corresponding to the result returned from \texttt{solve-defect}$(k)$ with size $s'$\;
  $g\gets \texttt{EQC-Heu-Pro}(G, g, \gamma)$; \tcp{expansion}
  \Return{$|V(g)|$}\;
  }
\end{algorithm}

In this section, we present the omitted details of \texttt{EQC-Pro} mentioned in Section~\ref{sec:EQC-pro}. Based on two observations from Section~\ref{sec:EQC-pro}, namely that (1) solutions from \texttt{solve-defect}$(k)$ can be reused for multiple values of $s$ and that (2) each iteration yields a valid \texttt{kDC} solution which can be used to further strengthen the lower bound for \texttt{MaxEQC}. As shown in Algorithm~\ref{alg:improved-framework-memo}, \texttt{EQC-Pro} is an improved iterative algorithm built on the bottom-up doubling framework \texttt{EQC-BU}(Algorithm~\ref{alg:bottom-up-framework}) and the heuristic strategy \texttt{EQC-Heu-Pro} (Algorithm~\ref{alg:heu-extend}), and further incorporates two main techniques in the \texttt{MemoizedSearch} function: \emph{memoization} and \emph{expansion}.
First, we initiate a global memoization table (\texttt{memo}, Line 1) to cache results from previous calls to \texttt{solve-defect}$(k)$, thus avoiding redundant computations across iterations. 
Subsequently, in Lines 2-3, we compute an initial heuristic solution \( g \) using \texttt{Degen-Opt} and refine it with \texttt{EQC-Heu-Pro}, which attempts to expand the solution. The size of the resulting solution is used to initialize the lower bound \( lb \) of the search space for the bottom-up iterative framework.

After initialization (Line 4), \texttt{EQC-Pro} performs the bottom-up doubling iterative search in Lines 5-19, utilizing \texttt{MemoizedSearch} (Lines 20-28) in each iteration. We reuse cached results of \texttt{solve-defect}$(k)$ when available; otherwise, we invoke it and store the result in \texttt{memo} if its size is at least $s$. Moreover, \texttt{MemoizedSearch} further \emph{expands} the corresponding solution using \texttt{EQC-Heu-Pro}, which may yield a larger feasible solution. Once the iterative search terminates, the size of the maximum $\gamma$-EQC is returned in Line 19. The correctness of \texttt{EQC-Pro} directly follows from that of Algorithm~\ref{alg:bottom-up-framework}.
\begin{figure}[t]
    \vspace{-0.15in}
    \centering
    \subfigure[$\gamma = 0.96$]{
        \includegraphics[width=0.22\textwidth]{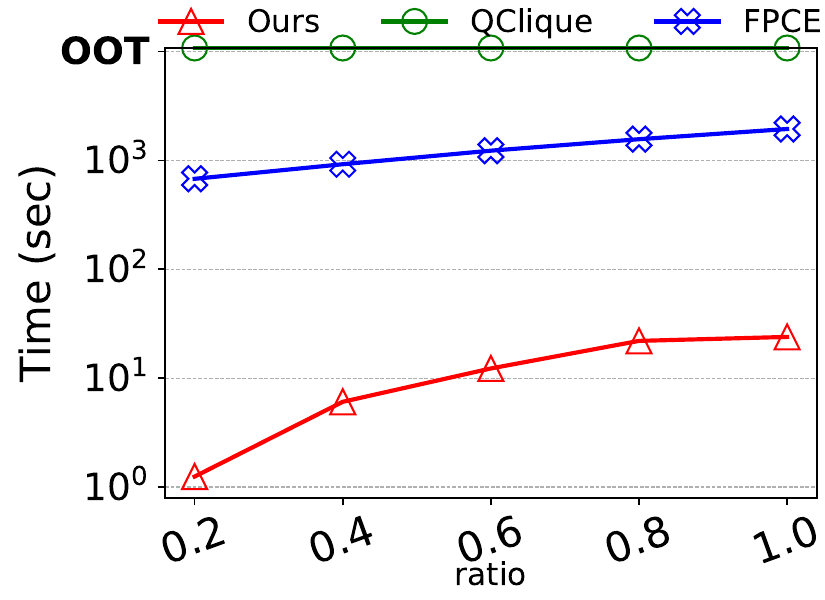}
        \label{fig:ratio096}
    }
    \subfigure[$\gamma = 0.98$]{
        \includegraphics[width=0.22\textwidth]{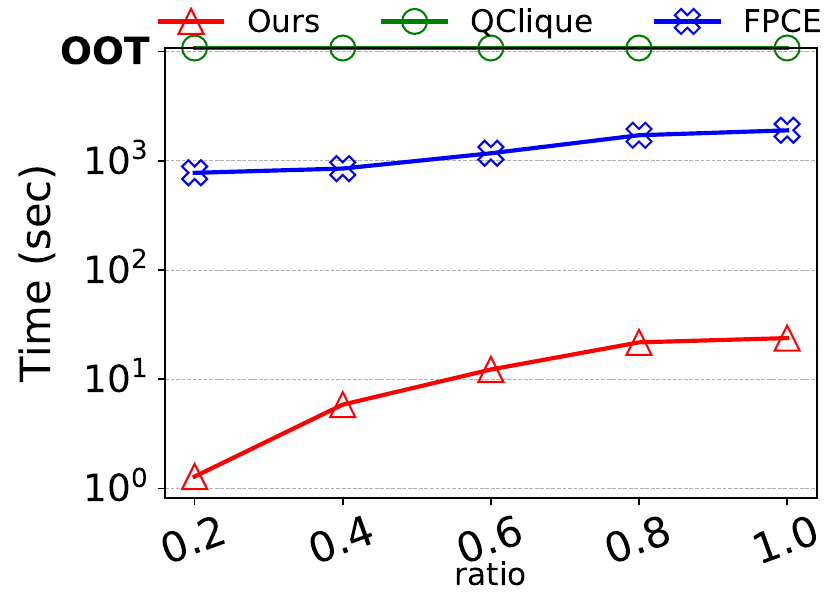}
        \label{fig:ratio098}
    }
    \vspace{-0.15in}
    \caption{Scalability test on instance \texttt{socfb-konect}.}
    \label{Fig:scalability}
\end{figure}

\section{Scalability Test} \label{sec:scale-appendix} To evaluate scalability, we select the largest instance across the two collections and perform subgraph sampling by randomly selecting vertices with increasing probabilities and extracting the induced subgraphs. The results are presented in Figure~\ref{Fig:scalability}. As shown, \texttt{QClique} consistently fails to solve any instance within the time limit, regardless of the sampling ratio or the value of $\gamma$, indicating its inability to handle large graphs. For \texttt{FPCE}, the runtime appears to grow nearly linearly with the number of vertices in the figure due to the use of a logarithmic scale on the $y$-axis. In reality, this behavior is consistent with its exponential time complexity of ${O}^*(2^n)$, where $n$ is the number of vertices in the sampled subgraph.

In contrast, the runtime of \texttt{EQC-Pro} initially increases more rapidly than that of \texttt{FPCE} as the graph size grows. This is because, for smaller subgraphs, the polynomial-time components, such as \texttt{Degen-Opt} and \texttt{EQC-Heu-Pro}, dominate the overall computational cost. However, as the graph size increases and the exponential search component becomes dominant, the advantage of our improved exponential base $O^*(\beta_\kappa^n)$ becomes increasingly apparent. Consequently, for larger graphs, \texttt{EQC-Pro} grows more slowly and consistently outperforms \texttt{FPCE} across all scales.

\end{document}